\newcommand{\RRR}{\mathbb{R}}
\newcommand{\RR}{\mathbb{R}^+}
\newcommand{\TT}{\mathbb{T}}
\newcommand{\PP}{\mathbb{P}}
\newcommand{\NN}{\mathbb{N}}
\newcommand{\ZZ}{\mathbb{Z}}
\newcommand{\TA}{\mathcal{A}}
\newcommand{\pta}{{\em parametric timed automata}}
\newcommand{\LTS}{\mathcal{L}}
\newcommand{\CAD}{{\tt CAD}}
\newcommand{\linf}{{\tt linf}}
\newcommand{\usup}{{\tt usup}}
\newcommand{\CF}{{\tt cf}}
\newcommand{\EUP}{{\tt up}}
\newcommand{\ELB}{{\tt lb}}
\newcommand{\aaa}{{\vec{\alpha}}}
\newcommand{\sgn}{{\tt sgn}}
\newcommand {\signs}{{\tt signs}}
\newcommand{\oomit}[1]{}
\newcommand{\Model}[1]{[\![#1]\!]}
\newcommand{\pcc}{{\em parametrically constrained clock}}
\newcommand{\deff}{\  \widehat{=}\ }
\title{Parameter Synthesis Problems for  one parametric
	clock  Timed Automata}
\author{Liyun Dai\inst{1}\thanks{Corresponding author} \and Taolue Chen\inst{2} \and Zhiming Liu\inst{1} \and Bican Xia\inst{3} \and Naijun Zhan\inst{4} \and Kim G. Larsen\inst{5}}
\institute{RISE, Southwest University, Chongqing, China 
	\and Department of Computer Science and Informatioin Systems, \\ 
		Birkbeck, University of London, UK \and
	LMAM \& School of Mathematical Sciences, Peking University \and
	State Key Laboratory of Computer Science, Institute of Software, CAS \and
	CISS, CS, Aalborg University, Denmark\\
	\email{dailiyun@swu.edu.cn ~ taolue@dcs.bbk.ac.uk ~zhimingliu88@swu.edu.cn ~ xbc@math.pku.edu.cn ~ znj@ios.ac.cn ~ kgl@cs.aau.dk}
}
\date{}
\begin{document}

\maketitle
%
%
%
%

\begin{abstract}
	In this paper, we study the parameter synthesis problem for a class of
	 parametric timed automata. The problem asks to construct the set of 
	valuations of the parameters in the parametric timed automaton,
	 referred to as the feasible region, under which the resulting 
	timed automaton satisfies certain properties. We show that the parameter
	synthesis problem of parametric timed automata with only one parametric
	 clock (unlimited  concretely constrained clock) and arbitrarily many  
	parameters is solvable when all the expressions are linear expressions. And it is moreover 
	the synthesis problem is solvable when the form of constraints 
	are parameter polynomial inequality not just simple constraint and parameter domain is nonnegative real number.  

	 
  \keywords{timed automata, parametric timed automata,  timed automata design}
\end{abstract}

\section{Introduction}
\label{sec:inr}
Real-time applications are  increasing importance, so are  their complexity and requirements for trustworthiness,  in the era of Internet of Things (IoT), 
especially  in the areas of industrial control and smart homes. Consider, for example, the control system of a boiler used in house. Such a system is required 
to  switch  on the gas within a certain bounded period of time  when the water gets too cold.  Indeed, the design and implementation of the system  not only 
have to  guarantee the correctness of system functionalities,  but also need to assure  that the application is in compliance with the non-functional 
requirements, that are timing constraints  in this case. 

{\em Timed automata} (TAs)  \cite{Alur90,alur1994a} are widely used  for modeling and verification of real-time systems. However, one disadvantage of the 
TA-based approach is that  it  can only be used to  verify \emph{concrete} properties, i.e., properties with concrete values of all timing parameters occurring in 
the system. Typical examples of such  parameters  are  upper and lower bounds of computation time, message delay and time-out. This makes the traditional 
TA-based approach  not ideal for the design of real-time applications because in the \emph{design phase}  concrete values are often  not available. This 
problem is usually dealt with extensive trial-and-error and prototyping activities to find out what concrete values of the parameters are suitable. This 
 approach of design is costly, laborious, and error-prone, for at least two reasons:  
(1) many trials with different parameter configurations suffer from unaffordable costs, without enough assurance  of a safety standard because a sufficient 
coverage of configurations is  difficult to achieve; (2) little  or no feedback information is provided to the developers to help improve the design when  a system 
malfunction is detected.

\subsection{Decidable parametric timed automata}
To mitigate the limitations of the TA-based approach, 
{\em parametric timed automata} (PTAs) are  proposed \cite{alur1993parametric,annichini2000symbolic,bandini2001application,HUNE2002183}, which  allow 
more general constraints on  invariants of notes (or states) and guards of edges (or transitions) of an automaton. Informally, a  clock $x$ of a PTA $\TA$  is 
called a \pcc\ if $x$ and some parameters both occur in a constraint  of $\TA$.    Obviously, given any  valuation of the  parameters in a PTA, we obtain a 
concrete TA. One of the most important questions of PTAs is the \emph{ parameter synthesis problem}, that  is, for a given property to compute the entire set of valuations 
of the parameters for a PTA such that when the parameters are instantiated by these valuations, the resulting TAs all satisfy the property. The  synthesis 
problem for general PTAs is known to be undecidable. There are, however, several proposals  to restrict the general PTAs from different perspectives to gain 
decidability.   Two kinds of restrictions that are being widely investigated are (1) on  the number of clocks/parameters in the PTA;  and (2) on the way in which  
parameters are bounded, such as the  L/U PTAs \cite{HUNE2002183}.

  There are many works about \pta. An algorithm  based on backward to solve  nontrivial class of parametric verification problems is   presented in 
  \cite{alur1993parametric}.  The authors  have proved that a large class of parametric verification problems are undecidable; they have also showed 
    that the remaining (intermediate) class of parametric verification problems for which  then have neither decision procedures nor undecidability results are 
    closely related 
    to various hard and open problems of logic and automata theory. A semi-algorithm approach based on 
    (1) expressive symbolic representation structures is called 
    parametric DBP's, and (2) accurate extrapolation techniques allow to speed up the reachability analysis and help its termination is  proposed in 
    \cite{annichini2000symbolic}.   An algorithm and the tools for reachability analysis of hybrid systems  is presented	in \cite{alur2002reachability}. They 
    combine the notion of predicate abstraction  
    with resent techniques for approximating the set of reachable states of linear systems using polyhedron. The main diffcult of this method is how to find the 
    enough predicates.   
    In \cite{jovanovic2015integer}, the authors give a method without  an explicit enumeration to  synthesize
all the values of parameters and give symbolic algorithms for reachability and unavoidability properties. 
  An adaptation of counterexample guided abstraction refinement (CEGAR) with which  one can obtain an under approximation of the set of good parameters 
  using linear programming   is proposed in \cite{frehse2008counterexample}.
   An inverse method which synthesizes the constraint of parameters for an existing trace such that it can guarantee its
    executes of \pta\ under this constraint with same previous trace is  provided  in \cite{andre2009inverse}. In \cite{jovanovic2015integer}, the authors 
    provide a subclass of parametric timed automata which they can actually and efficiently analyze. The author 
    of \cite{Andr16} makes a survey of decision and computation problems progress based on the recent 25 years' researches on these problems.

    The constraints in above works are simple constraint which means that in the form of constraint as $x\prec c$ ($x-y\prec c$), $x\prec p$ ($x-y\prec p$) or 
    logical combination of above forms
    where $x,y$ are clocks, $c$ is a constant and $p$ is parameter.  In this paper, we will extended  the form to $x\prec f(p_1,\cdots,p_m)$ 
    ($x-y\prec f(p_1,\cdots,p_m)$) where $p_1,\cdots, p_m$ are parameters and $f$ is a polynomial in $\ZZ[p_1,\cdots,p_m]$. 
    
    There are many works related to solving polynomial constraints problems e.g. \cite{Tarski51,collins1}.
    
    As one would expect, Tarski’s procedure consequently has been much im-
    proved. Most notably, Collins \cite{collins1} gave the first relatively effective method of
    quantifier elimination by cylindrical algebraic decomposition (\CAD). The \CAD\
    procedure itself has gone through many revisions \cite{Caviness1998,Hong,McCallum1,McCallum2,brown,Dolzmann2004,han2014constructing}.
    The \CAD\ algorithm works by
    decomposing $\mathbb{R}^k$ into connected components such that, in each cell, all of the
    polynomials from the problem are sign-invariant. To be able to perform such
    a particular decomposition, \CAD\ first performs a projection of the polynomials
     from the initial problem. This projection includes many new polynomials,
    derived from the initial ones, and these polynomials carry enough information
    to ensure that the decomposition is indeed possible. Unfortunately, the size of
    these projections sets grows exponentially in the number of variables, causing
    the projection phase to be a key hurdle to \CAD\ scalability.

    \subsubsection{Contribution}
    	In this paper, we study the parameter synthesis problem
    	of a class of parametric time automata.
    	We show that the parameter
    	synthesis problem of parametric timed automata with only one parametric
    	clock (unlimited  concretely constrained clock) and arbitrarily many  
    	parameters is solvable when all the expressions are linear expressions. And it is moreover 
    	the synthesis problem is solvable when the form of constraints 
    	are parameter polynomial inequality  and parameter domain is nonnegative real number.  
    
  {\small		
    	\begin{table}
    		\begin{threeparttable}
    		\caption{Our PTA results}
    	\begin{tabular}{|c|c|c|c|c|c|c|c|}    
    		 \hline
    		 $\TT$ & $\PP$& Constraints & P-clocks & NP-clocks & Params  & emptiness & synthesis\\
    		  \hline 
    		 $\NN$ & $\RRR$ & Polynomial constraints  &      1       &  0           &    any  &    ~  & solvable \\
    		  \hline 
    		  $\NN$ & $\ZZ$ & Simple constraints  &      1       &  any           &    any  &    ~  & solvable \\
    	     	  \hline 
    	\end{tabular}
    	 \begin{tablenotes}
    	 	\item  ``$\TT$" to denote the domain of clock.
    	 	\item  ``$\PP$" to denote the domain of parameter.
    	 	\item ``Constraints" is form of constraint in PTA include constraints occurring in  property.
    	 	\item ``P-clocks" is the number of parametric clock.
    	 	\item ``NP-clocks" is the number of concretely constrained clock.
    	 	\item ``Params" is the number of parameters occurring in PTA.
    	 	\item ``emptiness" denote the whether  decidable of emptiness problem.
    	 	\item ``synthesis" denote the whether  decidable of synthesis problem. 
    	 \end{tablenotes}
    	\end{threeparttable}
    	\end{table}
    	
}
 \subsubsection{Related work}
  Besides the above mentioned works, there are several other results that related to ours.   The idea of limiting the number of parameters used such that upper and lower bounds cannot share a same parameter is also presented in \cite{alur2001parametric} where the authors studied the logic 
  {\tt LTL} augmented with parameters. And  our topic \pta\ is different from theirs. An extension of the model checker {\tt UPPAAL} presented in \cite{HUNE2002183} is capable of synthesizing linear parameter constraints for the correctness of \pta\ and it also  identifies  a subclass of \pta\  (L/U automata) for which the emptiness problem is decidable.  Decidability results  for L/U automata have been further investigated  in \cite{bozzelli2009decision} where   the constrained versions of emptiness and universality of the set of parameter valuations for
  which there is a corresponding infinite accepting run of the automaton is studied and decidability if parameters of different types (lower and upper bound parameters) are not compared in the linear constraint is obtained. They show how to compute the explicit representation of the set of parameters when all the parameters are of the same type (L-automata and U-automata).  Compared with~\cite{bozzelli2009decision} which considers liveness problems of the system, our results are related to synthesis parameter which satisfies a given property. In \cite{bruyere2007realtime}, the authors show that
  the model-checking problem is decidable and the parameter synthesis problem
  is solvable, in discrete time, over a PTA with one parametric clock, if equality
  is not allowed in the formula. Compared with it, we  do not have  equality restriction. In \cite{andre2015language}, the authors proved that 
  the language-preservation problem is decidable for deterministic for the \pta\ with all lower bound parameters or all upper bound parameters and one parameter. However, the limitations we consider for obtaining decidability is orthogonal to those presented in \cite{andre2015language}. 
  In \cite{bundala2014advances}, the authors prove that the emptiness problem of \pta\ with two parameter clocks and one parameter is decidable.

 \subsubsection{Organization}
 After the introduction, the definition of \pta\ is presented in Section \ref{sec:pre}. In Section \ref{sec:newresult1} 
some theoretical results about  parameter synthesis problem are given. Based on result of \CAD\ we prove that
with only one parametric
clock and arbitrarily many parameters is solvable. And it is moreover
the form of constraints are parameter polynomial inequality.   In Section \ref{sec:simple},  We show that the parameter
synthesis problem of parametric timed automata with only one parametric
clock (unlimited  concretely constrained clock) and arbitrarily many  
parameters is solvable  when all the expressions are linear expressions.


\section{Parametric Timed Automata}
\label{sec:pre}
We introduce the basis of  PTAs  and set up terminology for our discussion. We first define some preliminary notations before we introduce PTAs. 
We will use a model of  labeled transition systems (LTS) to define  semantic behavior of PTAs.

\subsection{Preliminaries}

We use $\ZZ$, $\NN$, $\mathbb{R}$ and $\RR$ to denote the  sets of integers, natural numbers, real numbers and non-negative real numbers, respectively.  
Although each  PTA involves only a finite number of clocks and a finite number parameters, we need an infinite set  of {\em clock variables} (also simply 
called {\em clocks}), denoted by $\mathcal{X}$ and an infinite set of {\em parameters}, denoted by $\mathcal{P}$, both are enumerable. We use $X$ and $P$
 to denote (finite) sets of clocks and parameters and   $x$ and $p$, with subscripts if necessary,  to denote clocks and parameters, respectively.  We use $\TT$ 
 to denote the domain of clocks. We are mostly interested in the case that  $\TT=\NN$ or  $\TT=\RR$ of nonnegative reals. Unless explicitly 
specified, our results are applicable in either case. We use $\PP$ to denote the domain of clocks. We are mostly interested in the case that 
$\PP=\ZZ$ or  $\PP=\RRR$.

We mainly consider dense time, and thus we  define a {\em clock valuation} $\omega$
as a function of the type $\mathcal{X}\mapsto \TT$. For a finite set $X=\{x_1, \ldots, x_n\}$ of clocks, an evaluation $\omega$  restricted on $X$ can be represented by  
a $n$-dimensional   point $\omega(X)=(\omega(x_1), \omega(x_2),\ldots,\omega(x_n))$, and it is called an {\em parameter valuation of $X$} and 
simply denoted as $\omega$ when there is no confusion. Given a constant $d\in \TT$, we use $\omega+d$ to denote the evaluation that assigns any 
clock $x$ with the value $\omega(x)+d$, and $(\omega+d)(X)= (\omega(x_1)+d,\omega(x_2)+d,\ldots,\omega(x_n)+d)$. When $n=1$, we directly use $\omega$ as the value of clock $x_1$.  Similarly,  a 
{\em parameter valuation} $\gamma$  is an assignment of values to the  parameters, that is
 $v: \mathcal{P}\mapsto \PP$.  For a finite set $P=\{p_1,\ldots, p_m\}$ of $m$ parameters, a parameter valuation $\gamma$ restricted on $P$
  corresponds to  a $m$-dimensional point $(\gamma(p_1),\gamma(p_2),\ldots,\gamma(p_m))\in \PP^m$, and we use this vector to denote the 
  valuation $\gamma$ of $P$ when there is no confusion. When $m=1$, we directly use $\gamma$ as the value of $p_1$.

\begin{definition}[Expression]
	A linear expression $e$ is either an expression of the form $c_0+c_1p_1+\cdots+c_np_n$ where $c_0,\cdots,c_n\in \ZZ$, or $\infty$.
	We use $\CF(e,p)$ to denote the coefficient of $p$ in linear expression $e$.
	A polynomial expression is an expression of the form $\sum_{i=0}^h c_ip_1^{k_{i,1}}c\cdots p_m^{k_{i,m}}  $ where $c_0,\cdots,c_h\in \ZZ, k_{i,j}\in \NN$.
\end{definition}
We also write polynomial $f$ as form 
\[f(Y,x)=c_l'p_m^{d_l}+c_{l-1}'p_m^{d_{l-1}}+\cdots+c_1'p_m'^{d_1}+c_0'\]
where $Y=[y_1,\cdots,y_{k}],\ 0< d_1<\cdots < d_l $, and the coefficients $c_i'$ are in $\ZZ[y_1,\cdots, y_{k}]$ with $c_{l}'\neq 0$.
	
We use $\mathcal{LE}$ and $\mathcal{PE}$ to denote the set of linear expressions and polynomial expression, respectively. We use $\mathcal{E}$ to denote 
set $\mathcal{LE} \cup  \mathcal{PE}$. For an $e\in \mathcal{LE}$, we use  $\textit{con}(e)$  the constant $c_0$, and  $\CF(e,p)$  the coefficient of $p$ in $e$, i.e. $c_i$ if $p$ 
 is $p_i$ for $i=1,\ldots, m$, and $0$,  otherwise. For the convenience of discussion, we also say the infinity $\infty$ is a  expression.  We call  expression 
 $e$ a {\em parametric expression} if it contains  some parameter, a {\em concrete  expression}, otherwise (i.e., $e$ is parameter free).

A PTA only allows {\em parametric constraints} of the form  $x-y\sim e$, where $x$ and $y$ are clocks, $e$ is an  expression, and the  ordering  relation
 $\sim\in  \{>,\ge, <,\le, =\}$. A constraint $g$ is called a  {\em parameter-free} (or {\em concrete}) {\em  constraint} if the expression in it is concrete.  
 For an expression $e$, a parameter valuation $\gamma$, a clock valuation $\omega$ and a constraint $g$, let 
\begin{itemize}
\item $e[\gamma]$ be the (concretized) expression obtained from $e$ by substituting the value $\gamma(p_i)$ for $p_i$ in $e$, i.e. when $e$ is a
 linear expression $e=c_0+c_1p_1+\cdots, c_mp_m$, then $e[\gamma]=c_0+ c_1\times \gamma(p_1)+ \ldots + c_m\times \gamma(p_m)$,
\item $g[\gamma]$  be the predicate obtained from constraint $g$  by substituting the value $\gamma(p_i)$ for $p_i$ in $g$, and 
\item $\omega \models g$ holds if $g[\omega]$  holds.
\end{itemize}

A pair $(\gamma, \omega )$ of parameter valuation and clock valuation   gives an evaluation to any parametric constraint $g$. We use $g[\gamma, \omega]$
 to denote the truth value  of $g$ obtained by substituting each parameter $p$  and each clock $x$  by their values $\gamma(p)$ and $\omega(x)$, 
 respectively. We say the pair of valuations $(\gamma, \omega)$ satisfies constraint $g$, denoted by $(\gamma, \omega)\models g$,  if $g[ \gamma, \omega]$
  is evaluated to true. For a given parameter valuation $\gamma$, we define  $\Model{g[\gamma]}= \{\omega\mid (\gamma, \omega)\models g\}$ to be the set 
  of clock valuations which together with $\gamma$ satisfy $g$.

A clock $x$ is reset by an {\em update} which is an expression of the form $x:= b$, where  $b\in \NN$. Any reset $x:=b$ will change a clock valuation $
\omega$ to a clock valuation $\omega'$ such that $\omega'(x) =b$ and $\omega'(y) =\omega(y)$  for any other clock $y$. Given a clock valuation $\omega$ 
and a set $u$ of updates, called an {\em update set}, which contains at most one reset for one  clock,  we use $\omega[u]$ to denote the clock valuation after 
 applying all the clock resets in $u$ to $\omega$.  We  use $c[u]$ to denote the constraint which is used to assert the relation of the parameters with the clocks 
 values after the clock resets of $u$. Formally,  $c[u](\omega)\deff c(\omega[u])$ for every clock valuation $\omega$. 

It is easy to see that the general constraints $x-y\sim e$ can be expressed in terms of {\em atomic constraints} of the form $b_1x-b_2y\prec e$,  where 
$\prec \in \{<,\le\}$ and $b_1,b_2\in \{0,1\},e\in \mathcal{E} $. To be explicit, an atomic constraint is in one of the following three  forms $x-y \prec e$, $x\prec e$, 
or $-x\prec e$.  We can write  $-x_i \prec  e$ as $x_i\succ -e$. and $x-y\prec e$ as $y-x\succ -e$, where $\succ \in \{>,\geq\}$.  However, in this paper we mainly  consider  {\em simple 
constraints}  that are  finite  conjunctions of atomic constraints. 

\subsection{Parametric timed automata}
We assume the knowledge of timed automata (TAs), e.g., \cite{alur1999timed,bengtsson2004timed}. A  clock constraint of a TA  either a  {\em invariant 
property} when the TA is  in a state (or location)  or a {\em guard condition} to enable the changes of states (or a state transition). Such a constraint  is  in 
general  a Boolean expression of parametric free  atomic constraints. However, we can assume that the  guards and invariants of TA are simple concrete  
constraints, i.e. conjunctions of concrete atomic constraints. This is because we can always transform a TA  with  disjunctive guards and invariants to an 
equivalent  TA  with  guards and invariants which are simple constraints only.   

In what follows, we define  PTAs which extend TAs  to allow the use of parametric simple constraints as guards and invariants (see \cite{alur1993parametric}).

\begin{definition}[PTA]
	\label{def:pta}
	Given a finite set of clocks $X$ and a finite set of parameters $P$, a PTA   is a 5-tuple $\TA=(\Sigma,Q,q_0,I,\rightarrow)$,
	where
	\begin{itemize}
		\item $\Sigma$ is a finite set of actions.
		\item  $Q$ is a finite set of locations and  $q_0\in Q$ is  called the initial location,
		\item  $I$ is the invariant, assigning to every $q\in Q$ a simple constraint $I_q$ over 
		the clocks $X$ and parameters $P$, and
		\item $\rightarrow$
		is a discrete transition relation whose elements are of the form $(q,g,a,u,q')$, where $q,q'\in Q$, $u$ is an update set, $a\in \Sigma$ and $g$ 
		 is a simple constraint.
	\end{itemize}
\end{definition}
Given a PTA $\mathcal{A}$, a tuple $(q,g,a,u,q')\in \rightarrow$ is also denoted  by $q \xrightarrow{g\&a[u]} q'$, and it is called a transition step (by the 
guarded action $g\&a$). In this step, $a$ is the  action that triggers the transition. The constraint $g$ in the transition step is called the {\em guard} of the 
transition step, and only when $g$ holds in  a location  can the transition  take place. By this transition step, the system modeled by the automaton changes 
from location $q$ to location $q'$, and the clocks are reset by the updates in $u$. However, the meaning of the guards and clock resets and acceptable runs
 of a PTA will be defined by a labeled transition system (LTS) later on. At this moment, we define a {\em syntactic run} of a PTA $\mathcal{A}$ as a sequence 
 of consecutive transitions step starting from the initial location
\[\tau = (q_0, I_{q_0})\xrightarrow{g_1\&a_1[u_1]}(q_1, I_{q_1})\cdots \xrightarrow{g_\ell \&a_\ell [u_\ell]}(q_\ell, I_{q_\ell}).
\] We call a syntactic run $\tau$ is a {\em simple syntactic run} if $\tau$ has no location variants and clock resets.
 
Given a PTA $\TA$, a clock $x$ is said to be a {\em parametrically constrained  clock} in $\TA$  if there is a parametric constraint containing $x$. 
Otherwise,   $x$ is a concretely   constrained  clock. We can follow the procedures in   \cite{alur1993parametric} and \cite{bundala2014advances} to  
eliminate  from  $\TA$ all the concretely constrained clocks.  Thus, the  rest of this paper only considers the PTAs  in which all clocks are parametrically 
constrained. We use $\textit{expr}(\mathcal{A})$ and $\textit{para}(\mathcal{A})$ to denote the set of all  expressions and parameters in a PTA
 $\mathcal{A}$, respectively.
 
\begin{figure}[h!]
	\centering
	\includegraphics[width=0.65\textwidth]{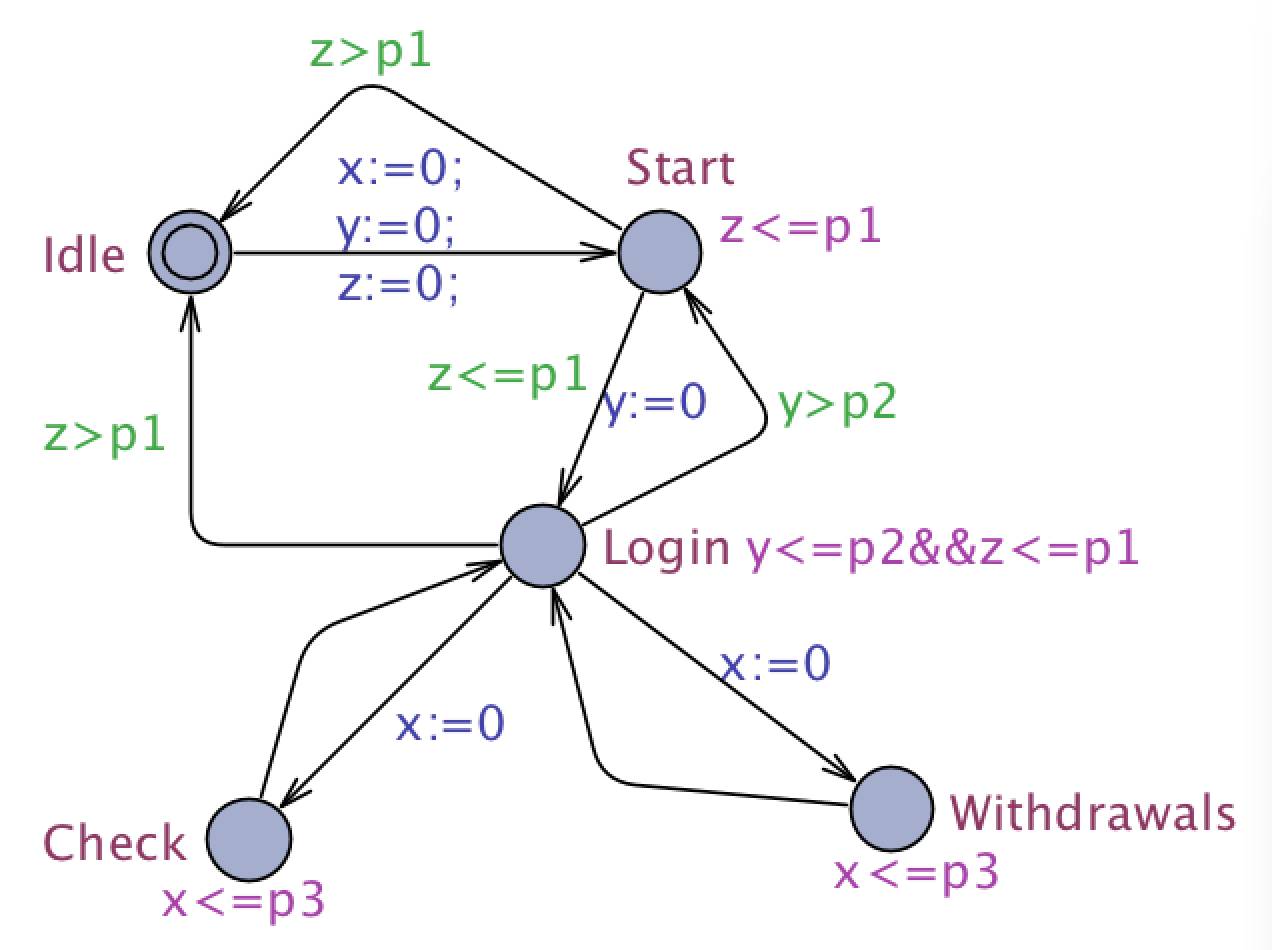}
	\caption{An ATM modeled using a PTA.} \label{fig:atm}
\end{figure}

\subsubsection*{Example 1}
The PTA in {\em Fig. \ref{fig:atm}} models an ATM. It has 5 locations, 3 clocks $\{x,y, z\}$  and 3 parameters $\{p_1,p_2, p_3\}$. This PTA is deterministic and
all the clocks are parametric. To understand the behavior of state transitions,  for examples, the  machine can initially idle for an arbitrarily long time. Then, 
 the user can start the system by, say, pressing a button and the PTA enters location ``Start" and resets the three clocks. The machine can remain in ``Start" 
 location as long as the invariant  $z\le p_1$ holds, and during this  time  the user can drive the system (by pressing a corresponding button) to login their 
 account and the automaton  enters  location ``Login" and resets clock $y$.  A time-out action occurs and it  goes back to ``Idle'' if  the machine stays at 
 ``Start''  for too long and the  invariant $z\leq p_1$ becomes false. Similarly, the machine can remain in location ``Login" as long as the invariant
  $y\le p_2 \wedge z\le p_1$ holds and during this time the user can decide either to ``Check'' (her balance) or to ``Withdraw" (money), say by pressing  
  corresponding buttons.   However,  if the user does not take any of these actions $p_2$ time units after the machine enter location ``Login",  the machine
   will back  to ``Start" location.

\subsection{Semantics of PTA via labeled transition systems}
We use a  standard model of  {\em labeled transition systems} (LTS) for describing  and analyzing the behavioral properties of  PTA.
\begin{definition}[LTS]
	\label{def:lts} A \emph{labeled transition system} (LTS) over a set of (action) symbols $\Delta$ is a triple
	$\LTS=(S,S_0,\rightarrow)$, where
	\begin{itemize}
	\item  $S$ is a set of states with a subset  $S_0\subseteq S$ of states called the  initial states.
	\item $\rightarrow \subseteq S\times \Delta  \times S$  is a relation, called the transition relation. 
	\end{itemize}
	We write $s \xrightarrow{a} s' $ for a triple $(s,a,s') \in \rightarrow$ and it is called a transition step by action $a$. 
	
	A run  of $\LTS$ is a finite alternating sequence of states in $ S$ and actions  $\Delta$, $ \xi = s_0a_1s_1\ldots a_\ell  s_\ell $,   such that 
	$s_0\in S_0$ and $s_{i-1}\xrightarrow{a_i}s_i\in \rightarrow$  for $i=1,\ldots,\ell$. A run $\xi$ can be written in the form of 
	$s_0\xrightarrow{a_1} s_1 \xrightarrow{a_2} \cdots \xrightarrow{a_{\ell}} s_\ell $.  	
	The length of a run $\xi$ is its number $\ell$ of transitions steps and it is denoted as $|\xi|$, and a state $s\in S$ is called reachable in $\LTS$ if 
	$s$ is the last state a run of $\LTS$, e.g. $s_\ell$ of $\xi$.  
\end{definition}

\begin{definition}[LTS semantics of PTA]
	\label{def:sem}
	For a PTA $\TA =(\Sigma, Q, q_0, I,\rightarrow)$ and a parameter valuation $\gamma$, the concrete semantics of PTA under $\gamma$,
	 denoted by $\TA[\gamma]$,  is the LTS $(S,S_0,\rightarrow)$ over $\Sigma\cup \RR$, where
\begin{itemize}
\item a state in  $S$  is a  location $q$ of  $\mathcal{A}$ augmented with the clock valuations which together with the parameter valuation $\gamma$ 
satisfy the invariant   $I_q$ of the location, that is
	\[S=\{ (q,\omega)\in Q\times (X\rightarrow \RR)\mid (\gamma, \omega)\models I_q\}\]
  \[	S_0=\{(q_0,\omega)\mid (\gamma, \omega) \models I_{q_0} \wedge \omega=(0,\cdots,0)\} \]

\item  any transition step in the transition $\rightarrow$ of the LTS is either an instantaneous  transition step  by  an action in $\Sigma$ defined by 
$\mathcal{A}$ or by a time advance, that are specified by the following rules, respectively
	\begin{itemize}
		\item {\bf instantaneous transition}:  for any $a\in \Sigma$, $(q,\omega)\xrightarrow{a}(q',\omega') $ if  there are  simple constraint $g$ and an  
		update set  $u$ such that $q\xrightarrow{ g\&a[u]} q'$, $(\gamma,\omega) \models g $ and $\omega'=\omega[u]$; and 
		\item {\bf time advance transition} $(q,\omega) \xrightarrow{d} (q',\omega') $ if $q'=q$ and $\omega'=\omega+d$.
	\end{itemize}
  \end{itemize}
\end{definition}
A {\em concrete run} of a PTA $\mathcal{A}$ for a given valuation $\gamma$ is a sequence of consecutive state transition steps 
$\xi=s_0\xrightarrow{t_1} s_1 \xrightarrow{t_2} \cdots \xrightarrow{t_{\ell}} s_\ell $  of the LTS $\mathcal{A}[\gamma]$, which we also call a run of the LTS
 $\mathcal{A}[\gamma]$.  A state $s = (q,\omega)$ of $\mathcal{A}[\gamma]$ is a {\em reachable state} of  $\TA[\gamma]$  if there exists some run  
 $\xi=s_0\xrightarrow{t_1} s_1 \xrightarrow{t_2} \cdots \xrightarrow{t_{\ell}} s_\ell $ of  $\TA[\gamma]$ such that $s=s_\ell $.

Without the loss of generality, we merge  any two consecutive time advance transitions respectively labelled by  $d_i$ $d_{i+1}$ into a single time advance 
transition labels by $d_i+d_{i+1}$. We can further merger a consecutive pair $s\xrightarrow{d} s'\xrightarrow{a} s''$ of a timed advance transition by $d$ and  
an instantaneous transition by an action $a$  in a run into a single observable transition  step $s\xrightarrow{a} s''$. If we do this repeatedly until all time 
advance steps are eliminated, we obtain an {\em untimed run} of the PTA (and the LTS), and the sequence of actions in an untimed run is called a  {\em trace}.

We call an untimed run  $\xi=s_0\xrightarrow{a_1}s_1\cdots \xrightarrow{a_\ell}s_\ell$  a {\em simple run}  if $\omega_i\ge \omega_{i-1}$ for
  $i=1,\cdots, \ell$, where $s_i=(q_i,\omega_i)$. It is easy to see that $\xi$ is a {\em simple untimed run} if each transition by $a_i$ does not have any clock 
  reset in $\xi$. 

\begin{definition}[LTS of trace]
 For a   PTA  $\mathcal{A}$ and a syntactic run
 \[\tau=(q_0{,}I_{q_0})\xrightarrow{g_1\&a_1[u_1]}(q_1{,}I_{q_1}){\cdots} \xrightarrow{g_\ell{\&}a_\ell[u_\ell]}(q_\ell{,}I_{q_\ell})\]
we define the PTA $\TA_\tau{=}(\Sigma_\tau {,}Q_\tau {,}q_{0,\tau}{,}I_\tau,\rightarrow_\tau)$, where 
 \begin{itemize}
 \item $\Sigma_\tau =\{a_i \mid i= 1,\cdots, \ell\}$,
 \item $Q_\tau =\{q_0,\cdots, q_\ell\}$ and $q_{0,\tau}=q_0$,
 \item  $I_\tau (i)=I_{q_i}$ for $i\in Q$, and 
 \item $\rightarrow_\tau =\{(q_{i-1},g_i, a_i, u_i,q_i)\mid i=1,\cdots, \ell \}$.
 \end{itemize}
Give a parameter valuation $\gamma$, the  concrete semantics
of $\tau$ under $\gamma$ is  defined to be the LTS $\TA_\tau[\gamma]$. 
\end{definition}
For a syntactic run 
 \[\tau=(q_0{,}I_{q_0})\xrightarrow{g_1\&a_1[u_1]}(q_1{,}I_{q_1}){\cdots} \xrightarrow{g_\ell{\&}a_\ell[u_\ell]}(q_\ell{,}I_{q_\ell})\]
We use $R(\TA_\tau[\gamma])$ to denote the set of  states $(q_k, \omega_k)$ of  $\mathcal{A}_\tau[\gamma]$ such that the following is an
 untimed run of $\TA_\tau[\gamma]$
 \[\xi=(q_0,\omega_0)\xrightarrow{a_1} (q_1,\omega_1) \cdots \xrightarrow{a_k}(q_k, \omega_k) \cdots\xrightarrow{a_\ell}(q_\ell, \omega_\ell).\]
 We also call $\xi$ is a run of syntactic run $\tau$ under $\gamma$.
 We use $\Gamma(\TA_{\tau})$ to denote the entire set of parameter valuation $\gamma$ which makes  $R(\TA_\tau[\gamma])\neq \emptyset$.

\subsection{Two decision problems for PTA}
We first present the  properties of PTAs  which we consider in this paper. 

\begin{definition}[Properties]
	\label{def:prop}
	A  {\em state  property} and a {\em system property} for a PTA  are specified by a state predicate $\phi$ and a temporal formula $\psi$ defined by 
	the following syntax, respectively: for $x,y\in X$, $e\in \mathcal{E}$ and $\prec\in \{<,\le, =\}$  and $q$ is a location.
	
	\[\begin{array}{lllll}
	  \phi&::=& x\prec e \mid -x \prec e\mid x-y\prec e\mid  q\mid \neg \phi\mid \phi  \wedge \phi \mid \phi \vee \phi\\
	  \psi&::=&\forall \Box\phi \mid \exists \Diamond \phi
	  \end{array}
	  \]

\end{definition}

Let  $\gamma$ be a parameter valuation  and   $\phi$ be a state formula. We  say $\TA[\gamma]$ {\em satisfies} $\exists\Diamond \phi$, denoted  by $
\TA[\gamma]\models \exists\Diamond \phi$,  if there  is a  reachable state $s$ of  $\TA[\gamma]$  such that $\phi$ holds in  state $s$. Similarly,  
$\TA[\gamma]$ {\em satisfies}  $\forall \Box \phi$, denoted by $\TA[\gamma] \models \forall \Box \phi$,  if $\phi$ holds in all reachable states of $\TA[\gamma]$.
 We can see that if $\TA[\gamma]{\models}\exists\Diamond \phi$,  there is an syntactic  run $\tau$ such that there is  a state in  $R(\mathcal{A}_\tau[\gamma])$
   satisfies  $\phi$. In this case, we also say that the syntactic   run $\tau$ satisfies $\phi$ under the parameter valuation $\gamma$. We denote it by 
   $\tau[\gamma]\models \phi$.

We are now ready to present the formal statement of the parameter synthesis problem and the emptiness problem of PTA.

\begin{problem}[The parameter synthesis problem]
	\label{pro:synth}
	Given a PTA $\TA$ and a system property  $\psi$, compute the entire set  $\Gamma(\TA,\psi)$ of parameter valuations such that
	 $\TA[\gamma]\models \psi$ for each  $\gamma \in \Gamma(\TA,\psi)$.
\end{problem}

 Solutions to the problems are important in system plan and optimization design. Notice that when there are no parameters in $\mathcal{A}$,  the problem 
 is decidable in PSPACE \cite{alur1994a}. This  implies  that if there are parameters in $\mathcal{A}$, the satisfaction  problem $\TA[\gamma]\models \psi$ 
 is decidable in PSPACE for any given parameter valuation $\gamma$.

A special case of the synthesis problem is the emptiness problem, which is  by itself very important and formulated below.
\begin{problem}[Emptiness problem]
	\label{pro:emp}
Given a PTA  $\TA$ and a system property $\psi$, is there a parameter valuation $\gamma$  so that $\TA[\gamma]\models \psi$?
\end{problem}
This is equivalent to the problem of checking if the set $\Gamma(\mathcal{A}, \psi)$ of feasible parameter valuations is empty.

Many safety verification problems can be reduced to the emptiness problem. We say that {\em Problem \ref{pro:emp}} is a special case of 
 {\em Problem \ref{pro:synth}} because solving the latter for a PTA $\TA$ and a property $\psi$ solves {\em Problem \ref{pro:emp}}. 
 
It is known that  the emptiness problem is decidable for a PTA with only one clock \cite{alur1993parametric}. However, the problem becomes undecidable
 for PTAs with more than two clocks~\cite{alur1993parametric}. 
Significant progress could only be made in 2002  when the  subclass of L/U PTA  were  proposed in \cite{HUNE2002183} and the emptiness problem was 
proved to be decidable for these automata. In the following, we will extend these results and define some classes of PTAs for which  we propose solutions
 to the parameter synthesis problem and the emptiness problem.

\section{Parametric timed automata with one parametric clock} 
\label{sec:newresult1}

In this section we consider {\em parameter synthesis problem} of PTA with one parametric clock and   arbitrarily many  
parameters. The time values $\TT=\NN$ and parameter values $\PP=\RRR$. We first provide some result of \CAD, then
 prove the synthesis problem of PTA with one parametric clock is solvable.

\subsection{Cylindrical Algebraic Decomposition}
Delineability plays a crucial role in the theory of \CAD. Following the terminology used in \CAD, we say a
connected subset of $\RRR^m$ is a region. Given a region $S$, the cylinder $Z$ over $S$ is $S\times \RRR$.
A $\theta$-section of $Z$ is  a set of points $\left<\aaa, \theta(\aaa) \right>$, where $\aaa$ is in $S$ and $\theta$ 
is continuous function from $S$ to $\mathbb{R}$. A $(\theta_1, \theta_2)$-sector of $Z$ is the set of points $\left<\aaa,\beta\right>$,
where $\aaa$ is in $S$ and $\theta_1(\aaa)< \beta< \theta_2(\aaa)$ for continuous functions $\theta_1< \theta_2$ from 
$S$ to $\mathbb{R}$.  Sections and sectors are also regions. Given a subset of $S$ of $\mathbb{R}^m$,
a  decomposition of $S$ is a finite collection of disjoint regions $S_1,\cdots, S_k$ such than $S_1\cup\cdots \cup S_k=S$. 
 Given a region $S$, and a set of continuous functions $\theta_1<\cdots<\theta_k$ from $S$ to
$\mathbb{R}$, we can decompose the cylinder $S\times \mathbb{R}$ into the following regions:
\begin{itemize}
	\item the $\theta_i$-sections, for $1\le i\le k$, and
	\item $(\theta_i,\theta_{i+1})$-sections, for $0\le i\le k$,
\end{itemize}
where, with sight abuse of notation, we define $\theta_0$ as the constant function that return $-\infty$ and $\theta_{k+1}$ the constant function that 
return  $\infty$.
A set of polynomials  $\{f_1,\cdots, f_s\}\subset \ZZ[P,x],$ $P=[p_1,\cdots,p_{m}]$, is  said to be delineable in a region $S\subset \mathbb{R}^{m-1}$ if the following conditions hold:
\begin{enumerate}
	\item For every $1\le i \le s$, the total number of complex roots of $f_i(\aaa, x)$  remains invariant for any $\aaa\in S$.
	\item For every $i\le i\le s$, the number of distinct  complex roots of $f_i(\aaa,x)$ remains 
	invariant for any $\aaa$ in $S$.
	\item  For every $1\le i<j \le s$, the number of common complex roots of $f_i(\aaa, x)$ and $f_j(\aaa,x)$
	remains invariant for any $\aaa$ in $S$.
\end{enumerate}

A {\em sign assignment} for a set of polynomials $F$ is a mapping $\delta$, from  polynomials 
in $F$ to $\{-1,0,1\}$. Given  a set  of polynomials $F\subset \ZZ[P,p]$, we say a sign assignmemnt $\delta$ is {\em realizable}  with  respect to some
 $\aaa$ in $\mathbb{R}^m$ , if there exists a $\beta\in \mathbb{R}$ such that every $f\in F$ takes the sign corresponding to its sign assignment, i.e.,  
\sgn$(f(\aaa,\beta))=\delta(f)$.  The function \sgn\ maps a real number to its sign $\{-1,0,1\}$.
 We use $\signs(F,\aaa)$ to denote the set of realizable sign assignments of $F$ with respect to $\aaa$.
 
 \begin{theorem}[Lemma 1 of \cite{jovanovic2013solving}]
 	If a set of polynomials $F\subset \ZZ[P,x]$ is delineable  over a region $S$, then $\signs(F,\aaa)$ is invariant over $S$.
 \end{theorem}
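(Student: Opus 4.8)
\noindent The plan is to recover, from the three bare invariance conditions in the definition of delineability, the classical ``root functions'' picture over $S$, and then to reduce the claim to connectedness of $S$ plus the intermediate value theorem.

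First I would show that delineability of $F$ over the connected region $S$ forces the real root structure of the product $f=\prod_{i=1}^{s}f_i$ to be invariant over $S$. By condition~(1) each $f_i$ has a fixed degree in $x$ on $S$, i.e.\ its leading coefficient does not vanish on $S$, so after dividing by that coefficient the complex roots of $f_i(\aaa,\cdot)$ depend continuously on $\aaa\in S$. By conditions~(2) and~(3), the (distinct) complex root branches of all the $f_i$, laid out over the connected set $S$, are pairwise either always equal or always distinct: any collision at a single point of $S$ would strictly decrease a count that is assumed invariant. In particular a non-real branch cannot become real, since at the moment it did so it would coincide with its complex-conjugate branch and thereby decrease the number of distinct complex roots. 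Consequently the number $k$ of distinct real roots of $f(\aaa,\cdot)$ is constant over $S$, these roots are given by continuous functions $\theta_1(\aaa)<\cdots<\theta_k(\aaa)$ on $S$, and for each $i$ the set $R_i\subseteq\{1,\dots,k\}$ of indices $j$ with $f_i(\aaa,\theta_j(\aaa))=0$ is independent of $\aaa$.

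Next, setting $\theta_0\equiv-\infty$ and $\theta_{k+1}\equiv+\infty$, the graphs of $\theta_1,\dots,\theta_k$ decompose the cylinder $S\times\RRR$ into the $k$ sections $\mathrm{Sec}_j=\{\langle\aaa,\theta_j(\aaa)\rangle:\aaa\in S\}$ and the $k+1$ sectors $\mathrm{Sct}_j=\{\langle\aaa,\beta\rangle:\aaa\in S,\ \theta_j(\aaa)<\beta<\theta_{j+1}(\aaa)\}$, each of which is connected (the image of a connected set under a continuous map, respectively the region between two continuous graphs over a connected base). On a sector $\mathrm{Sct}_j$ every $f_i$ is continuous and has no zero, since all real zeros of $f_i(\aaa,\cdot)$ lie among $\theta_1(\aaa),\dots,\theta_k(\aaa)$; hence $\sgn f_i$ is a constant $\epsilon_{i,j}\in\{-1,1\}$ there. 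On a section $\mathrm{Sec}_j$, $f_i$ either vanishes identically (exactly when $j\in R_i$) or is nowhere zero and hence of constant sign $\eta_{i,j}\in\{-1,1\}$. I would then fix any $\aaa\in S$: as $\beta$ runs over $\RRR$, the point $\langle\aaa,\beta\rangle$ visits $\mathrm{Sct}_0,\mathrm{Sec}_1,\mathrm{Sct}_1,\dots,\mathrm{Sec}_k,\mathrm{Sct}_k$ in this order, so the tuple $(\sgn f_1(\aaa,\beta),\dots,\sgn f_s(\aaa,\beta))$ ranges over exactly the $2k+1$ vectors $(\epsilon_{1,j},\dots,\epsilon_{s,j})$ for $0\le j\le k$ together with, for $1\le j\le k$, the vector whose $i$-th entry is $0$ if $j\in R_i$ and $\eta_{i,j}$ otherwise. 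None of these vectors, nor $k$, nor the sets $R_i$ depends on the chosen $\aaa$, so $\signs(F,\aaa)$ equals this fixed finite set of at most $2k+1$ sign assignments for every $\aaa\in S$, which is the assertion.

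The main obstacle is the first step: turning the purely combinatorial invariants (1)--(3) into globally defined, continuous, order-preserving real root functions with $\aaa$-independent incidence data $R_i$. This is exactly the standard delineability lemma of CAD theory, and carrying it out rigorously requires continuity of roots in the coefficients together with the connectedness of $S$ to rule out any branch reordering, real/complex transition, or change in which polynomials share a given branch. Once that structural picture is established, the remainder is a routine intermediate-value argument on connected sets.
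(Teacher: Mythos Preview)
The paper does not supply a proof of this statement at all: it is quoted verbatim as ``Lemma~1 of \cite{jovanovic2013solving}'' and used as a black box, with the next line already moving on to the CAD main theorem. So there is nothing in the paper to compare your argument against.

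That said, your sketch is the standard CAD argument and is essentially sound. The reduction of condition~(1) to nonvanishing of the leading coefficient (hence continuous dependence of roots), the use of conditions~(2) and~(3) to rule out root collisions and real/complex transitions over the connected base, and the final intermediate-value argument on the induced sections and sectors are exactly the classical line of proof. The one place to be a little more careful in a full write-up is the passage from ``the distinct complex root branches are pairwise either always equal or always distinct'' to the existence of \emph{globally} defined, \emph{ordered} real root functions $\theta_1<\cdots<\theta_k$ on all of $S$: continuity of roots is a local statement about the unordered multiset of roots, and promoting it to globally labeled continuous branches over a merely connected (not simply connected) base needs a word about why no monodromy can permute branches. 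Here it cannot, because the real roots are totally ordered at every fibre and a continuous permutation of a finite totally ordered set along a loop must be the identity; but this deserves a sentence.
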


 \begin{theorem}[Main algorithm of \cite{collins1}]
 	\label{the:cad}
 	$F$ is a set of polynomials in $\ZZ[P,x]$,  there is a algorithm  which computes  decomposition  $\RRR^m$  $S_1,\cdots,S_k$
 	such that $F$ is delineable over $S_i$ for $i=1,\cdots,k$. 
 \end{theorem}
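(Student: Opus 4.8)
This statement is precisely the correctness and termination of Collins' \CAD\ algorithm, so the plan is to recover its three classical phases --- projection, base case, and lifting --- and to argue that each is effective and that the lifting step really does produce cells over which the input family is delineable. \emph{Projection.} First I would fix the variable order $p_1 \prec \cdots \prec p_m \prec x$ and introduce a projection operator $\mathrm{PROJ}$ sending a finite set of polynomials in $k$ variables to a finite set of polynomials in the first $k-1$ of them: for $G$ a set of polynomials, $\mathrm{PROJ}(G)$ collects, for every $g\in G$, its nonzero coefficients with respect to the top variable, its discriminant (equivalently the relevant principal subresultant coefficients of $g$ and $\partial g/\partial u_k$), and, for every pair $g,h\in G$, the resultant of $g$ and $h$ together with their subresultant coefficients. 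Iterating, set $F^{(m+1)}=F$ and $F^{(j)}=\mathrm{PROJ}(F^{(j+1)})$, down to $F^{(1)}\subset\ZZ[p_1]$; each step is a finite amount of resultant, gcd and coefficient arithmetic over $\ZZ$, so the whole chain is computable.

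\emph{Base case and lifting.} The polynomials in $F^{(1)}$ are univariate and hence have finitely many real roots, which I would isolate exactly (Sturm sequences, say); these cut $\RRR^1$ into finitely many points and open intervals, on each of which every element of $F^{(1)}$ has constant sign, and I keep a rational sample point per cell --- this is the decomposition $\mathcal{D}_1$. Inductively, assume $\mathcal{D}_j$ is a decomposition of $\RRR^j$ into connected cells on each of which every polynomial of $F^{(j)}$ is sign-invariant, and fix a cell $S\in\mathcal{D}_j$. Since $F^{(j)}$ contains $\mathrm{PROJ}(F^{(j+1)})$, the sign-invariance of the projection polynomials over the connected set $S$ is exactly the hypothesis of the \CAD\ delineability theorem, which then guarantees that $F^{(j+1)}$ is delineable over $S$ in the sense defined above: complex-root counts and multiplicities stay fixed, no pair of polynomials gains or loses a common root, and therefore the real roots vary continuously with $\aaa\in S$ without colliding or escaping to $\pm\infty$. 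They can thus be listed globally on $S$ as continuous functions $\theta_1<\cdots<\theta_{k_S}$, and the cylinder $S\times\RRR$ decomposes into the $\theta_i$-sections and the $(\theta_i,\theta_{i+1})$-sectors exactly as in the itemized construction preceding the statement; evaluating the sample point of $S$ on each new cell certifies the sign of every $f\in F^{(j+1)}$ there and supplies its sample point. Collecting over all $S\in\mathcal{D}_j$ gives $\mathcal{D}_{j+1}$. Carrying this up to the $(p_1,\dots,p_m)$-level yields a decomposition $S_1,\dots,S_k$ of $\RRR^m$ that is sign-invariant for $\mathrm{PROJ}(F)$, and one final application of the delineability theorem shows $F$ itself is delineable over each $S_i$, which is the claim; termination is immediate because the projection chain is finite and every phase is a finite computation.

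\emph{Main obstacle.} The technical heart is the delineability theorem used in the lifting step: one must show that the polynomials added by $\mathrm{PROJ}$ carry \emph{enough} information, i.e. that once all leading coefficients are sign-invariant (in particular nonvanishing), and all discriminants and pairwise resultants are sign-invariant, over a connected region, the root structure of the higher-dimensional family cannot change. This rests on continuity of polynomial roots as functions of their coefficients, on the behaviour of subresultant sequences under specialization, and on connectedness of the cells; it is the substantive content of \cite{collins1} (and of the refinements cited earlier), and a self-contained proof would have to reproduce that analysis. Everything else --- projection arithmetic, real-root isolation, sample-point bookkeeping --- is standard computer algebra. (Note also that the companion direction, ``delineable over $S$ implies $\signs(F,\aaa)$ invariant over $S$'', is the already-quoted Lemma~1 of \cite{jovanovic2013solving}, so once the decomposition is in hand the sign data attached to the sample points is valid throughout each $S_i$.)
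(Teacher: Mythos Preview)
The paper does not prove this theorem at all: it is stated as ``Main algorithm of \cite{collins1}'' and left without proof, used purely as a black box to obtain Lemma~\ref{lem:cad}. So there is nothing in the paper to compare your argument against.

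Your sketch is a faithful outline of Collins' three-phase \CAD\ procedure (projection, univariate base case, lifting via the delineability theorem) and you correctly identify the delineability theorem as the nontrivial ingredient that you would have to import from \cite{collins1}. That is exactly the right shape for a proof, but since the paper simply cites the result, your write-up goes well beyond what the paper itself provides.
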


 \begin{lemma}
 \label{lem:cad}
 For a polynomials formula $\phi$ where each polynomial of $\phi$ in $\ZZ[P,x]$, there is a  decomposition $S_1,\cdots,S_k$ of $\RRR^m$
 such that $\phi$ is true or false for each point of $S_i$ for $i=1,\cdots,k$. Moreover, \CAD\  provides a 
 sample point $\aaa_i$ where $\aaa_i\in S_i$ for $i=1,\cdots, k$. 
 \end{lemma}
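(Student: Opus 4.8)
The plan is to reduce Lemma~\ref{lem:cad} directly to the CAD machinery recalled above, namely Theorem~\ref{the:cad} (Collins' algorithm) together with Lemma~1 of \cite{jovanovic2013solving}. First I would gather the finitely many polynomials occurring in the atoms of $\phi$ into a single finite set $F\subset\ZZ[P,x]$, treating $x$ as the distinguished last coordinate and $p_1,\ldots,p_m$ as the base coordinates. Applying Theorem~\ref{the:cad} to $F$ produces, algorithmically, a decomposition $S_1',\ldots,S_{k'}'$ of $\RRR^m$ over each of which $F$ is delineable.

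Next I would carry out the lifting step described just before Theorem~\ref{the:cad}. Fix a region $S_j'$. Since $F$ is delineable over $S_j'$, the real roots of the polynomials of $F$, regarded as functions of the base point $\aaa$, are given by finitely many continuous functions $\theta_1<\cdots<\theta_\ell$ on $S_j'$ with no crossings; hence the cylinder $S_j'\times\RRR$ decomposes into the $\theta_i$-sections and the $(\theta_i,\theta_{i+1})$-sectors, with $\theta_0\equiv-\infty$ and $\theta_{\ell+1}\equiv+\infty$ as above. On each such section or sector every $f\in F$ keeps a constant sign — equivalently, by Lemma~1 of \cite{jovanovic2013solving} the realizable sign assignment $\signs(F,\cdot)$ is invariant — since moving within a section or sector changes neither whether a point lies on a given root branch nor, by continuity, the intermediate value theorem, and the invariance of the numbers of (distinct, common) complex roots, the sign of $f$ away from its roots. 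Collecting all sections and sectors over all $S_j'$ gives a finite decomposition of $\RRR^m$, which I rename $S_1,\ldots,S_k$, on each cell of which the entire sign vector of $F$ is constant.

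The conclusion about $\phi$ is then immediate: every atom of $\phi$ is a sign condition on some $f\in F$, and the Boolean connectives $\neg,\wedge,\vee$ preserve constancy, so $\phi$ takes a single truth value at every point of a given cell $S_i$. For the sample points I would appeal to the fact that Collins' procedure is constructive at each level of its recursion: once a sample point $\aaa$ of a base cell has been built, a point of a sector over it is obtained by choosing a rational strictly between the values at $\aaa$ of two consecutive root functions, and a point of a section is the corresponding representable algebraic number; carrying this out bottom-up yields an explicit $\aaa_i\in S_i$ for each $i$.

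The only genuinely delicate point is the lifting step — verifying that delineability of $F$ over $S_j'$ really does force each $f\in F$ to be sign-invariant on every section and sector of $S_j'\times\RRR$; this is exactly where the three clauses of the definition of delineability (invariance of the number of complex roots, of distinct complex roots, and of common complex roots of pairs) are used. Everything else is bookkeeping on top of Theorem~\ref{the:cad} and Lemma~1 of \cite{jovanovic2013solving}.
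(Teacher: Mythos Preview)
The paper gives no proof of this lemma; it is stated as a direct corollary of Theorem~\ref{the:cad} and the preceding theorem (Lemma~1 of \cite{jovanovic2013solving}) and left without argument. Your proposal therefore already supplies more than the paper does, and the overall strategy---collect the atoms of $\phi$ into a finite $F$, run Collins' algorithm, and read off sign-invariance---is the intended one.

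There is, however, a dimension slip in your write-up that you should fix. The lifting you describe produces sections and sectors partitioning each cylinder $S_j'\times\RRR$, so collecting them yields a decomposition of $\RRR^{m+1}$, not of $\RRR^m$ as you wrote. But the lemma (and its only use, in Theorem~\ref{the:oneP}) asks for a decomposition of the \emph{parameter space} $\RRR^m$: a point of $S_i$ is a parameter valuation $\aaa$, and ``$\phi$ is true or false on $S_i$'' is to be read as ``the satisfiability of $\phi(\aaa,x)$, viewed as a formula in the single free variable $x$, is invariant as $\aaa$ ranges over $S_i$.'' For that conclusion you do not need the lifting step at all. The base decomposition $S_1',\ldots,S_{k'}'$ produced by Theorem~\ref{the:cad} already suffices: by the theorem you cite (Lemma~1 of \cite{jovanovic2013solving}) the set $\signs(F,\aaa)$ of realizable sign assignments is invariant on each $S_j'$, and whether $\exists x\,\phi(\aaa,x)$ holds is determined entirely by $\signs(F,\aaa)$. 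Your lifting paragraph is correct CAD reasoning, but it overshoots what the lemma requires and, as written, lands the final cells in the wrong ambient space. Likewise, the sample points the lemma wants are the base-level points $\aaa_i\in S_i'\subset\RRR^m$ that Collins' algorithm already outputs; you do not need to build the $(m{+}1)$-dimensional sample points you describe.
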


\subsection{Parametric timed automata with one parametric clock}
The establishment and proof of this theorem involve a sequence of techniques to reduce the problem to computing the set of reachable states of an LTS. 
The major steps of  reduction include 
\begin{enumerate}
\item Reduce the problem of satisfaction of a system property $\psi$, say in the form of $\exists \Diamond \phi$,  by a  run $\tau$ to a reachability problem. 
This is done by encoding the state property in $\psi$ as a conjunction of  the invariant of a state.
\item Then we   move the  state invariants in a run out of the states and conjoin them to the guards of the corresponding transitions. 
\item Construct feasible runs for a given syntactic run in order to reach a given location.  This requires to define the notions of lower and  upper bounds of 
guards of transitions, through which an lower bound of feasible  parameter valuation is defined. 
\end{enumerate} 

 \subsection{Reduce satisfaction of  system to  reachability problem} 
We note  that $\psi$ is either of the form $\exists \Diamond \phi$ or  the  dual form $\forall  \Box \phi$, where $\phi$ is a state property.  Therefore, we only 
need to consider the problem of computing the set $\Gamma(\TA,\psi)$  for the case when $\psi$ is a formula of the form  $\exists \Diamond \phi$, i.e., 
there is a syntactic run $\tau$ such that  $\tau[\gamma] \models \phi$ for every $\gamma\in \Gamma(\TA,\psi)$. Our idea is to reduce the problem  of 
deciding $\TA \models \psi$ to a reachability problem of an LTS by encoding the state property $\phi$ in  $\exists \Diamond \phi$  into the guards of the 
transitions of $\mathcal{A}$. 
 
\begin{definition}[Encoding state property]
	Let  $\phi$ be a  state formula and $q$ be a location. We definite $\alpha(\phi,q)$ as follows, where $\equiv$ is used to denote syntactic equality 
	between formulas:
	\begin{itemize}
		\item  $\alpha(\phi, q)\equiv \phi$ if $\phi \equiv x-y\prec e$,  $\phi \equiv x\prec e$ or   $\phi \equiv -x\prec e$,  where $x$ and $y$ are clocks 
		and $e$ is an expression. 
		\item  when $\phi$ is a location $q'$,  $\alpha(\phi,q')\equiv true$ if $q'$ is $q$ and  $false$ otherwise.
		\item $\alpha $ preserves all Boolean connectives, that is   $\alpha(\neg \phi_1,q)\equiv \neg\alpha(\phi_1,q)$, 
		 $\alpha(\phi_1\wedge\phi_2,q)\equiv \alpha(\phi_1,q)\wedge \alpha(\phi_2,q)$, and  
		 $\alpha(\phi_1\vee \phi_2,q)\equiv  \alpha(\phi_1,q)\vee \alpha(\phi_2,q)$.
	\end{itemize}
\end{definition}
We can easily prove the   following lemma.

\begin{lemma}
	\label{lem:moveP}
	Given a PTA\ $\TA$, $\psi\equiv \exists\Diamond \phi$,  and  a syntactic run of $\mathcal{A}$ 
	\[\tau=(q_0,I_{q_0})\xrightarrow{g_1\& a_1[u_1]}(q_1,I_{q_1}){\cdots} \xrightarrow{g_\ell\&a_\ell[u_\ell]}(q_\ell,I_{q_\ell})\] 
we overload the function notation $\alpha$ and define the encoded run $\alpha(\tau)$ to be 
		\[
	 (q_0,I_{q_0})\xrightarrow{g_1\&a_1[u_1]}(q_1{,}I_{q_1}){\cdots}\xrightarrow{g_\ell\&a_\ell[u_\ell]}(q_\ell{,}I_{q_\ell }{\wedge} \alpha(\phi{,}q_\ell))
\]
	Then $\tau$ satisfies $\psi$ under parameter valuation $\gamma$ if and only if $R(\mathcal{A}_{\alpha (\tau)}[\gamma])\neq \emptyset$.
\end{lemma}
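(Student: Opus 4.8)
The plan is to prove Lemma~\ref{lem:moveP} by directly unfolding the definitions of syntactic run satisfaction and of the LTS semantics, and observing that the encoding $\alpha(\phi,q_\ell)$ faithfully captures ``$\phi$ holds at the final location $q_\ell$.'' First I would recall what $\tau[\gamma]\models\psi$ means for $\psi\equiv\exists\Diamond\phi$: by the discussion preceding Problem~\ref{pro:synth}, it means there is an untimed run $\xi=(q_0,\omega_0)\xrightarrow{a_1}(q_1,\omega_1)\cdots\xrightarrow{a_\ell}(q_\ell,\omega_\ell)$ of $\TA_\tau[\gamma]$ whose last state $(q_\ell,\omega_\ell)$ satisfies the state formula $\phi$. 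Since $\phi$ is evaluated at a fixed location $q_\ell$, the truth value of $\phi$ at $(q_\ell,\omega_\ell)$ coincides with the truth value of $\alpha(\phi,q_\ell)$ at $\omega_\ell$ with parameters instantiated by $\gamma$; this is the crux and I would prove it as a small sub-claim by structural induction on $\phi$.

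For that sub-claim: for atomic clock constraints $x-y\prec e$, $x\prec e$, $-x\prec e$ the definition gives $\alpha(\phi,q_\ell)\equiv\phi$, so there is nothing to show. For a location atom $q'$, note that $(q_\ell,\omega_\ell)\models q'$ iff $q'=q_\ell$, which is exactly the truth value of $\alpha(q',q_\ell)$, namely $\mathit{true}$ if $q'=q_\ell$ and $\mathit{false}$ otherwise. The Boolean cases follow immediately since $\alpha$ commutes with $\neg,\wedge,\vee$ and so does the satisfaction relation. This establishes $(q_\ell,\omega_\ell)\models\phi \iff \omega_\ell\models\alpha(\phi,q_\ell)[\gamma]$.

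Next I would compare the runs of $\TA_\tau[\gamma]$ with the runs of $\TA_{\alpha(\tau)}[\gamma]$. The two PTAs $\TA_\tau$ and $\TA_{\alpha(\tau)}$ differ only in the invariant of the last location: $I_{q_\ell}$ versus $I_{q_\ell}\wedge\alpha(\phi,q_\ell)$; all other locations, guards, updates and the transition structure are identical. Hence an untimed run $\xi=(q_0,\omega_0)\xrightarrow{a_1}\cdots\xrightarrow{a_\ell}(q_\ell,\omega_\ell)$ is a run of $\TA_{\alpha(\tau)}[\gamma]$ if and only if it is a run of $\TA_\tau[\gamma]$ \emph{and} the final state additionally satisfies $\alpha(\phi,q_\ell)$, i.e. $(\gamma,\omega_\ell)\models I_{q_\ell}\wedge\alpha(\phi,q_\ell)$ — the constraint added to the guard/invariant of the last step is precisely what the LTS semantics (Definition~\ref{def:sem}) checks when entering $q_\ell$. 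Combining this with the sub-claim: $R(\TA_{\alpha(\tau)}[\gamma])\neq\emptyset$ iff there is a run of $\TA_\tau[\gamma]$ reaching $q_\ell$ in a state satisfying $\phi$, which by definition is exactly $\tau[\gamma]\models\phi$, i.e. $\tau$ satisfies $\psi$ under $\gamma$.

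I do not expect a serious obstacle here — the lemma is essentially a bookkeeping statement, and the paper itself flags it as one we ``can easily prove.'' The only points requiring a little care are (i) making precise that the added conjunct $\alpha(\phi,q_\ell)$ attached to the last transition is checked exactly when the run enters $q_\ell$ (so one should be slightly careful if $q_\ell$ is revisited earlier in $\tau$, but since $\TA_\tau$ has the linear structure $q_0,\dots,q_\ell$ with distinct index positions, the invariant is imposed at each visit and in particular at the final one, which suffices for the ``if'' direction and does no harm for ``only if''), and (ii) handling the degenerate case $\ell=0$, where $\alpha(\tau)$ simply strengthens the initial invariant and the argument still goes through. Everything else is the routine structural induction on $\phi$ described above.
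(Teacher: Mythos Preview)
Your proposal is correct and matches the paper's intent: the paper does not give a proof at all, simply stating that ``we can easily prove'' the lemma, and your routine unfolding of the definitions together with the structural induction on $\phi$ for the sub-claim is exactly the kind of argument this remark invites. The two minor caveats you flag (repeated visits to $q_\ell$ in the linear-indexed $\TA_\tau$, and the degenerate case $\ell=0$) are handled as you describe and do not require anything beyond what you wrote.
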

 Notice the  term guard is slightly abused in the lemma as $\alpha(\phi,q_\ell)$  may have disjunctions, and thus it may not be a simple constraint.
 
 \subsection{Moving state  invariants to guards of transitions}
  It is easy to see that both the invariant $I_q$  in the pre-state of the transition and the guard $g$ in a transition step  $(q,I_q)\xrightarrow{g\&a[u]} (q',I_{q'})$
   are both enabling conditions for the transition to take place.  Furthermore, the invariant $I_{q'}$  in the post-state of a transition needs to be guaranteed by 
   the set of clock resets $u$. Thus we can also understand this constraint as a guard condition for the transition to take place (the transition is not allowed to 
   take place if the invariant of the post-state is false. 
  
 For a PTA $\TA$ and a syntactic  run
	 \[\tau=(q_0,I_{q_0})\xrightarrow{g_1\&a_1[u_1]}(q_1{,}I_{q_1}){\cdots}\xrightarrow{g_\ell \&a_\ell[u_\ell]}(q_\ell ,I_{q_\ell}).\] 
	 Let $\overline{g}_i = (g_i\wedge I_{q_{i-1}}\wedge I_{q_i}[u_i])$. We define $\beta(\tau)$ as \[
		\begin{split}
	(q_0,true)\xrightarrow{\overline{g}_1\&a_1[u_1]}(q_1,true)\cdots
	\xrightarrow{\overline{g}_\ell \&a_\ell[u_\ell]}(q_\ell, true)
	\end{split}
\]	
\begin{lemma}	
\label{lem:moveI}
	For a PTA $\TA$, parameter valuation $\gamma$ and  a syntactic  run
	 \[\tau=(q_0,I_{q_0})\xrightarrow{g_1\&a_1[u_1]}(q_1{,}I_{q_1}){\cdots}\xrightarrow{g_\ell \&a_\ell[u_\ell]}(q_\ell ,I_{q_\ell})\] 
we have $(\gamma, (0,\cdots, 0))\models I_{q_0}$ and $R(\mathcal{A}_{\beta(\tau)}[\gamma])\neq \emptyset$ if and only if 
 $R(\mathcal{A}_\tau[\gamma])\neq \emptyset $.
\end{lemma}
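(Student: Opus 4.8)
The plan is to prove both implications by exhibiting a step‑for‑step correspondence between untimed runs of $\mathcal{A}_\tau[\gamma]$ and untimed runs of $\mathcal{A}_{\beta(\tau)}[\gamma]$: the two LTSs have the same locations, updates and actions, and $\beta(\tau)$ differs only in that it carries each pre‑state invariant $I_{q_{i-1}}$ and each post‑state invariant $I_{q_i}[u_i]$ inside the guard $\overline g_i=g_i\wedge I_{q_{i-1}}\wedge I_{q_i}[u_i]$ rather than attached to the states. The one place where this reshuffling is not invariant‑neutral is the initial state: since $\beta(\tau)$ trivializes $I_{q_0}$ to $\mathit{true}$, the point $(q_0,(0,\dots,0))$ is an initial state of $\mathcal{A}_{\beta(\tau)}[\gamma]$ for every $\gamma$, whereas in $\mathcal{A}_\tau[\gamma]$ it is initial only when $(\gamma,(0,\dots,0))\models I_{q_0}$ — which is exactly why that condition appears as a separate conjunct in the statement. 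Before the two directions I would record one elementary fact about simple constraints: if $(\gamma,\omega)\models I$ and $(\gamma,\omega+d)\models I$ for a simple constraint $I$ and $d\ge 0$, then $(\gamma,\omega+d')\models I$ for all $0\le d'\le d$, because each atomic conjunct $x\prec e$, $-x\prec e$, or $x-y\prec e$ is preserved (upper bounds only get tighter, lower bounds only get looser, clock differences are unchanged by a uniform shift). I would also fix the bookkeeping that, by the remark after Definition~\ref{def:sem}, each transition $(q_{i-1},\omega_{i-1})\xrightarrow{a_i}(q_i,\omega_i)$ of an untimed run splits as a time advance by some $d_i\ge 0$ followed by a discrete step, so that $(\gamma,\omega_{i-1}+d_i)\models g_i$, $\omega_i=(\omega_{i-1}+d_i)[u_i]$, and every intermediate $(q_{i-1},\omega_{i-1}+d')$ with $0\le d'\le d_i$ is a state of the LTS.

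For the ($\Rightarrow$) direction I would start from an untimed run $\xi=(q_0,\omega_0)\xrightarrow{a_1}\cdots\xrightarrow{a_\ell}(q_\ell,\omega_\ell)$ of $\mathcal{A}_\tau[\gamma]$. From the definition of $S_0$ it yields $\omega_0=(0,\dots,0)$ and $(\gamma,\omega_0)\models I_{q_0}$, which is the required side condition. Then I would argue that the same sequence of locations, valuations and delays is an untimed run of $\mathcal{A}_{\beta(\tau)}[\gamma]$: all its states are trivially valid because $\beta(\tau)$ has invariant $\mathit{true}$ everywhere, so the only thing to check is $(\gamma,\omega_{i-1}+d_i)\models\overline g_i$ for each $i$. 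Here $g_i$ holds by hypothesis; $I_{q_{i-1}}$ holds because $(q_{i-1},\omega_{i-1}+d_i)$ is a state of $\mathcal{A}_\tau[\gamma]$; and $I_{q_i}[u_i]$ unfolds, via $c[u](\omega)\deff c(\omega[u])$, to $(\gamma,(\omega_{i-1}+d_i)[u_i])=(\gamma,\omega_i)\models I_{q_i}$, which holds because $(q_i,\omega_i)$ is a state of $\mathcal{A}_\tau[\gamma]$. Hence $R(\mathcal{A}_{\beta(\tau)}[\gamma])\ne\emptyset$.

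For the ($\Leftarrow$) direction I would assume $(\gamma,(0,\dots,0))\models I_{q_0}$ and take an untimed run $\xi$ of $\mathcal{A}_{\beta(\tau)}[\gamma]$ of the same shape (again $\omega_0=(0,\dots,0)$), then reconstruct a run of $\mathcal{A}_\tau[\gamma]$ from it, proving by induction on $i$ that $(q_i,\omega_i)$ is a state of $\mathcal{A}_\tau[\gamma]$, i.e. $(\gamma,\omega_i)\models I_{q_i}$: the base case $i=0$ is the side condition, and in the step $\overline g_i\Rightarrow I_{q_i}[u_i]$ gives $(\gamma,\omega_i)\models I_{q_i}$ exactly as above. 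What still needs checking is that each time advance is legal in $\mathcal{A}_\tau[\gamma]$, namely that all intermediate states $(q_{i-1},\omega_{i-1}+d')$ are valid; by the convexity fact recorded above it suffices that $(\gamma,\omega_{i-1})\models I_{q_{i-1}}$ — available from the induction hypothesis, or the side condition when $i=1$ — and that $(\gamma,\omega_{i-1}+d_i)\models I_{q_{i-1}}$, which follows from $\overline g_i\Rightarrow I_{q_{i-1}}$; finally the discrete step is enabled because $\overline g_i\Rightarrow g_i$. So $\xi$ is also an untimed run of $\mathcal{A}_\tau[\gamma]$ and $R(\mathcal{A}_\tau[\gamma])\ne\emptyset$. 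The only genuinely delicate point in all of this is the treatment of time advances: one needs the convexity observation for simple constraints, and the small induction that makes the pre‑state invariant $I_{q_{i-1}}$ available at $\omega_{i-1}$ now that it no longer sits on the state in $\beta(\tau)$; everything else is routine unwinding of the definition of $\beta$ and of the LTS semantics.
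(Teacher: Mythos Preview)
Your proposal is correct and takes essentially the same route as the paper: exhibit one alternating run and verify, by unpacking $\overline g_i=g_i\wedge I_{q_{i-1}}\wedge I_{q_i}[u_i]$, that it is simultaneously a run of $\mathcal{A}_\tau[\gamma]$ and of $\mathcal{A}_{\beta(\tau)}[\gamma]$. Your treatment is in fact more careful than the paper's --- you add the convexity argument for intermediate time points and an explicit induction on $i$ --- but under the time-advance rule in Definition~\ref{def:sem} (which constrains only the two endpoints of a delay, not the interior) these extras are not strictly required, so the paper's terser argument already suffices.
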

\begin{proof}
Assume $(\gamma, x=0)\models I_{q_0}$ and $R(\mathcal{A}_{\beta(\tau)}[\gamma])\neq \emptyset$.
There is run $\xi $ of $\mathcal{A}_{\beta(\tau)}[\gamma]$  which is an alternating sequence of instantaneous and time advance transition steps 
  \[\xi= (q_0{,}\omega_0)\xrightarrow{d_0} (q_0{,}\omega_0') \xrightarrow{a_1} (q_1{,}\omega_1) \cdots   \xrightarrow{a_\ell} (q_\ell{,}\omega_\ell)\]
  such that $(\gamma, \omega_i')\models g_{a_{i+1}}\wedge I_{q_{i}}\wedge I_{q_{i+1}}[u_{a_{i+1}}]$ and $\omega_{i+1}=\omega_i'[u_{a_i}]$
 for $i=0,\cdots, \ell-1$.  Hence,  by the definition of $\mathcal{A}_\tau[\gamma]$, $\xi$ is also a run of $\tau$ under $\gamma$, and thus
  $R(\mathcal{A}_\tau[\gamma])\neq \emptyset$. 
 
  For the ``if'' direction,  assume there is $\xi$ as defined above which is  a  run of $\tau$ for the  parameter valuation  $\gamma$.
 Then by the  definition of the concrete semantics,  we have $(\gamma, x=0)\models I_{q_0}$, $(\gamma, \omega_i')\models g_{a_{i+1}} \wedge I_{q_{i}} $
  and $(\gamma, \omega_i'[u_{a_{i+1}}])\models I_{q_{i+1}} $ for $i=0.\cdots, \ell-1$.	
  	 In other words,  $(\gamma, \omega_i')\models I_{q_{i+1}}[u_{a_{i+1}}] $ for $i=0.\cdots, \ell-1$.
  	 	 Therefore,  $(\gamma, (0,\cdots,0))\models I_{q_0}$ and
 $\xi$ is a run of $\beta(\tau)$ under $\gamma$, i.e., $R(\mathcal{A}_{\beta(\tau)}[\gamma])\neq \emptyset$. \qed
\end{proof}

Since there is one parametric clock $x$, we can divide the conjuncts of simple constraint $g$ into two parts $\ELB(g)$ and $\EUP(g)$ where $g=\ELB(g)\wedge \EUP(g)$ and every conjunct of $\ELB(g)$ with form $-x\prec e$, every  conjunct of $\EUP(g)$ with form $x\prec e$.

\begin{definition}
\label{def:min}
For a concrete constraint $g$ we use $\linf(g)$ to denote the  infimum nonnegative value which satisfies $\ELB(g)$, if there is no value which 
 makes $\ELB(g)$ satisfy then $\linf(g)=\infty$. And we use $\usup(g)$ to denote the supremum nonnegative value which satisfies $\EUP(g)$, if there is no value which 
 makes $\EUP(g)$ satisfy then $\usup(g)=0$.
\end{definition}

\begin{definition}
 For a syntactic run \[\tau=(q_0,true)(g_1,a_1,\emptyset)(q_1,true)\cdots (g_\ell,a_\ell,\emptyset)(q_\ell,true)\]  with one clock $x$ in PTA $\TA$ where
	 $q_i\in Q, (q_{i-1},g_{i},a_i,\emptyset,q_i) \in \rightarrow$ and a parameter valuation $\gamma$, we use  $\varphi_{i,j}(\tau,\gamma)$ denote formula 
	 \[\left( \left(x\ge 0\right) \wedge \ELB(g_i[\gamma])\wedge   \EUP(g_j[\gamma]) \right) .\]

\end{definition}

\begin{lemma}
	\label{lem:withoutreset}
 	 For a syntactic run \[\tau=(q_0,true)(g_1,a_1,\emptyset)(q_1,true)\cdots (g_\ell,a_\ell,\emptyset)(q_\ell,true)\]  with one clock $x$ in PTA $\TA$ where
	 $q_i\in Q, (q_{i-1},g_{i},a_i,\emptyset,q_i) \in \rightarrow$, $R(\TA_{\tau}[\gamma])\neq \emptyset $  under  parameter valuation $\gamma$  if and 
	 only if formula $\varphi_{i,j}(\tau,\gamma)$  	is satisable for $j=i,\cdots,\ell,\ i=1,\cdots, \ell$.
 	 
\end{lemma}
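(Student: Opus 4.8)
The plan is to exploit the fact that the run $\tau$ contains no clock resets, so that in any untimed run of $\TA_\tau[\gamma]$ the single clock $x$ takes a non-decreasing sequence of values. Unfolding the LTS semantics (Definition~\ref{def:sem}) together with the merging of time and action steps, an untimed run of $\TA_\tau[\gamma]$ has the shape
\[
(q_0,v_0)\xrightarrow{d_0}(q_0,v_0+d_0)\xrightarrow{a_1}(q_1,v_1)\xrightarrow{d_1}\cdots\xrightarrow{a_\ell}(q_\ell,v_\ell),
\]
where $v_0=0$, each $d_{i-1}\in\TT$ is a delay, and---because every update set is empty---$v_i=v_{i-1}+d_{i-1}$ and the guard $g_i$ is tested exactly at the clock value $v_i$, i.e.\ $(\gamma,v_i)\models g_i$. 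Hence $R(\TA_\tau[\gamma])\neq\emptyset$ if and only if there is a sequence $0=v_0\le v_1\le\cdots\le v_\ell$ in $\TT$ with $g_i[\gamma,v_i]$ true for $i=1,\dots,\ell$ (from such a sequence a concrete run is recovered by taking $d_{i-1}=v_i-v_{i-1}$). So it suffices to show that such a monotone sequence exists iff $\varphi_{i,j}(\tau,\gamma)$ is satisfiable for all $1\le i\le j\le\ell$. Throughout we use $g_i=\ELB(g_i)\wedge\EUP(g_i)$, where $\ELB(g_i)$ is a conjunction of lower bounds on $x$ (solution set upward closed in $x$) and $\EUP(g_i)$ a conjunction of upper bounds (downward closed).

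For the forward implication, given a witnessing sequence $0\le v_1\le\cdots\le v_\ell$ and $i\le j$, the value $v_j$ satisfies $\varphi_{i,j}(\tau,\gamma)\equiv(x\ge 0)\wedge\ELB(g_i[\gamma])\wedge\EUP(g_j[\gamma])$: indeed $v_j\ge v_0=0$; $\EUP(g_j[\gamma])$ holds at $v_j$ because all of $g_j[\gamma,v_j]$ does; and $\ELB(g_i[\gamma])$ holds at $v_i$, hence at $v_j\ge v_i$, since increasing $x$ preserves lower-bound atoms. So every $\varphi_{i,j}(\tau,\gamma)$ is satisfiable.

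For the backward implication I build the sequence greedily. For $1\le j\le\ell$ set $L_j=\max_{1\le k\le j}\linf(g_k[\gamma])$, the infimum of the nonnegative values of $x$ satisfying all of $\ELB(g_1[\gamma]),\dots,\ELB(g_j[\gamma])$ (finite, since a conjunction of lower bounds is always satisfiable), and take $v_j:=L_j$. Then $v_1\le\cdots\le v_\ell$, and each $v_j$ satisfies $\ELB(g_j[\gamma])$ by construction; it remains to check $\EUP(g_j[\gamma])$. Pick $k\le j$ with $\linf(g_k[\gamma])=L_j$; satisfiability of $\varphi_{k,j}(\tau,\gamma)$ yields $w\ge 0$ at which both $\ELB(g_k[\gamma])$ and $\EUP(g_j[\gamma])$ hold, whence $w\ge L_j=v_j$, and since $\EUP(g_j[\gamma])$ is downward closed it holds at $v_j$ as well. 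Thus $g_j[\gamma,v_j]$ holds for every $j$, the sought sequence exists, and $R(\TA_\tau[\gamma])\neq\emptyset$. (A single small $\epsilon>0$, chosen below finitely many positive gaps supplied by the $\varphi_{i,j}$'s, handles the case where an active lower bound is strict and attained exactly at some $L_j$; this is unnecessary in the discrete case $\TT=\NN$ considered in this section, where strict bounds on an integer variable become non-strict.)

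The crux---and the only place needing care---is this backward direction: gluing the finitely many two-index conditions $\varphi_{i,j}$ into one globally monotone assignment of clock values, with the routine bookkeeping of strict versus non-strict inequalities inside the greedy choice. Everything else is the monotonicity observation that a lower-bound guard met at step $i$ is automatically met at every later step, which is exactly what makes the pairwise conditions $\varphi_{i,j}$ both necessary and sufficient.
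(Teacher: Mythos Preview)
Your overall strategy matches the paper's: reduce $R(\TA_\tau[\gamma])\neq\emptyset$ to the existence of a nondecreasing sequence $0\le v_1\le\cdots\le v_\ell$ hitting each $g_i$, dispatch the forward direction by upward-closure of lower bounds, and construct an explicit witness for the backward direction. The construction itself, however, is different. The paper takes the \emph{midpoint}
\[
\omega_i \;=\; \frac{\max_{j\le i}\linf(g_j[\gamma]) \;+\; \min_{j\ge i}\usup(g_j[\gamma])}{2}
\]
and then performs a four-case analysis on whether $\omega_i$ touches $\linf(g_i)$ or $\usup(g_i)$, using the satisfiability of the relevant $\varphi_{i,j}$ to force the offending bound to be non-strict. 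You instead take the \emph{greedy} value $v_j=L_j=\max_{k\le j}\linf(g_k[\gamma])$ and defer the strictness issue to an $\epsilon$-perturbation. Your construction is more elementary, and in the discrete setting $\TT=\NN$ declared for this section it is complete as stated: infima over $\NN$ are attained, strict bounds on an integer clock become non-strict, and $v_j=L_j$ already satisfies $\ELB(g_j)$ without perturbation.

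The only place that is not fully justified is your parenthetical about the dense-time case. A \emph{uniform} shift $v_j\mapsto L_j+\epsilon$ does not always work: take $g_1\equiv(x\ge 5)\wedge(x\le 5)$ and $g_2\equiv(x>5)$. All $\varphi_{i,j}$ are satisfiable, $L_1=L_2=5$, and a run exists ($v_1=5$, $v_2=6$), yet $L_1+\epsilon$ violates $g_1$ for every $\epsilon>0$ while $L_2$ itself violates $g_2$. One must perturb \emph{selectively} and then re-establish monotonicity, using $\varphi_{k,j'}$ (with $k$ the index carrying the active strict lower bound) to show that every later $j'$ with $L_{j'}=L_k$ can also absorb the same $\epsilon$ on the upper side. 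That is precisely the work the paper's case analysis is doing. Since the lemma sits in a section with $\TT=\NN$ this does not affect correctness there, but your remark overstates how routine the continuous case is.
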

\begin{proof}
	The ``if" side is easy to check.
	For prove ``only if" side,
	let 
	\[
	\omega_i=\frac{\max\{\linf(g_j[\gamma])\mid j=1,\cdots, i \} +\min\{\usup(g_j[\gamma])\mid j=i,\cdots, \ell \} }{2}.	\]
	
 We claim that 
	\[\xi=s_0  d_0 s_0' a_1 s_1\cdots  d_{\ell-1}s_{\ell-1}' a_\ell s_\ell \] is a run of $\TA$ where 
	$s_0=(q_0,0),\ d_i=\omega_{i+1}-\omega_i,\  s_i=(q_i, \omega_i),\ s_i'=(q_i, \omega_{i+1})$ for $i=0,\cdots, \ell$. 
	Since $\varphi_{i,j} $  is satisable,  $\linf(g_i[\gamma])\le \usup(g_j[\gamma]) $ if $j\ge i$. Hence $\linf(g_i[\gamma])\le \usup(g_i[\gamma]))$ for 
	$i=1,\cdots, \ell$.  Since
\[
	\omega_{i+1}=\frac{\max\{\linf(g_j[\gamma])\mid j=1,\cdots, i+1 \} +\min\{\usup(g_j[\gamma])\mid j=i+1,\cdots, \ell \} }{2},\]
	 $\max\{\linf(g_j[\gamma])\mid j=1,\cdots, i+1 \}\ge \max\{\linf(g_j[\gamma])\mid j=1,\cdots, i \}$ and 
	$\min\{\usup(g_j[\gamma])\mid j=i+1,\cdots, \ell \}  \ge \min\{\usup(g_j[\gamma])\mid j=i,\cdots, \ell \}  $. So, $\omega_{i+1} \ge \omega_i$.
		
	 Hence, for proving the claim we only need to prove that $\omega_i$  makes   constraint $g_i[\gamma]$ satisable for $i=1,\cdots, \ell$. As 
	 $\linf(g_i[\gamma])\le \usup(g_j[\gamma]) $ when $j\ge i$, 
	 \begin{equation}
	 \label{eq:1}
	 \begin{split}
	\omega_i&=\frac{\max\{\linf(g_j[\gamma])\mid j=1,\cdots, i \} +\min\{\usup(g_j[\gamma])\mid j=i,\cdots, \ell \} }{2}\\ 
	&\ge 	\frac{ \linf(g_i[\gamma]) + \min\{\usup(g_j[\gamma])\mid j=i,\cdots, \ell \}  }{2}\\
	&\ge 	\frac{\linf(g_i[\gamma]) + \linf(g_i[\gamma])}{2}\\
	&=\linf(g_i[\gamma])
	\end{split} 
    	\end{equation}
	and 
	\begin{equation}
	 \label{eq:2}
	 \begin{split}
	\omega_i&=\frac{\max\{\linf(g_j[\gamma])\mid j=1,\cdots, i \} +\min\{\usup(g_j[\gamma])\mid j=i,\cdots, \ell \} }{2}\\ 
	&\le 	\frac{\max\{\linf(g_j[\gamma])\mid j=1,\cdots, i \} +\usup(g_i[\gamma])   }{2}\\
	&\le 	\frac{\usup(g_i[\gamma]) + \usup(g_i[\gamma])}{2}\\
	&=\usup(g_i[\gamma]).
	\end{split} 
	\end{equation}
	   
	 We prove the claim by cases
	 
	 	\begin{itemize}
	 	\item  When   $\omega_i> \linf(g_i[\gamma])$ and  $\omega_i< \usup(g_i[\gamma])$. It is easy to check it this case,  formula $\omega_i\models g_i[\gamma]$ holds.
	 	\item When   $\omega_i= \linf(g_i[\gamma])$ and  $\omega_i< \usup(g_i[\gamma])$. 
	 	As the definition of equation (\ref{eq:1}), there is a $\usup(g_j[\gamma])=\omega_i$ and $j\ge i$. Since   $\varphi_{i,j} (\tau,\gamma)$ is satisable,  $\omega_i$
	 	is the only value which makes $\varphi_{i,j}(\tau,\gamma) $ hold. Hence, 
	 	 $\omega_i$ satisfies constraint $\ELB(g_i[\gamma])$. Moreover,  formula $\omega_i\models g_i[\gamma] $ holds.
	 	\item When   $\omega_i> \linf(g_i[\gamma])$ and  $\omega_i= \usup(g_i[\gamma])$. 
	 			As the definition of equation (\ref{eq:2}), there is a $\linf(g_j[\gamma])=\omega_i$ and $j\le i$. Since   $\varphi_{j,i} (\tau,\gamma)$ is  satisable and $\omega_i$
	 			is the only value which makes   $\varphi_{j,i}(\tau,\gamma) $ hold. Hence,  formula
	 			$\omega_i\models \EUP(g_i[\gamma])$ hold. Moreover,  formula $\omega_i\models g_i[\gamma]$ holds.
	 	\item When   $\omega_i= \linf(g_i[\gamma])$ and  $\omega_i= \usup(g_i[\gamma])$. 
	 		As the definition of equation (\ref{eq:1}), there is a $\usup(g_j[\gamma])=\omega_i$ and $j\ge i$. Since   $\varphi_{i,j}(\tau,\gamma) $ is  satisable and $\omega_i$
	 		is the only value which makes    $\varphi_{i,j}(\tau,\gamma) $ hold. Hence, 
	 		$\omega_i\models \ELB(g_i[\gamma])$.  	As the  definition of equation (\ref{eq:2}), there is a $\linf(g_j[\gamma])=\omega_i$ and $j\le i$. 
			Since   $\varphi_{j,i} $  is  satisable and $\omega_i$
	 		is the only value which make   $\varphi_{j,i} (\tau,\gamma)$ hold. Hence, formula
	 		$\omega_i\models \EUP(g_i[\gamma]) $ hols.  So,  formula $\omega_i\models g_i[\gamma]$ holds.
	  	\end{itemize}
	\qed
\end{proof}

{\em Lemma \ref{lem:withoutreset}} only solves the case when there is no update in the syntactic run. The following lemma will furtherly solve the case 
when there exists update.
\begin{lemma}
	\label{lem:withreset} 
	For a syntactic run \[\tau=(q_0,true)(g_1,a_1,u_1)(q_1,true)\cdots (g_\ell,a_\ell,u_\ell)(q_\ell,true)\] in PTA $\TA$ with one clock $x$ where 
	$q_i\in Q, (q_{i-1},g_{i},a_i,u_i,q_i) \in \rightarrow$,   $R(\TA_{\tau}[\gamma])\neq \emptyset $  under  parameter valuation $\gamma$  if and only if   formula
	$\varphi_{i,j}(\tau,\gamma)$ is  satisable  for $j=i,\cdots,\ell,\ i=1,\cdots, \ell$.
	
\end{lemma}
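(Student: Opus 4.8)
The plan is to reduce the statement to Lemma~\ref{lem:withoutreset} by slicing $\tau$ at its reset points. With a single clock $x$, each update $u_i$ is either $\emptyset$ or of the form $\{x:=b_i\}$; writing $K=\{k_1<\cdots<k_r\}$ for the positions with $u_{k_s}\neq\emptyset$, the guard indices split into maximal \emph{reset-free blocks} $B_1=\{1,\dots,k_1\}$, $B_2=\{k_1+1,\dots,k_2\}$, $\dots$, $B_{r+1}=\{k_r+1,\dots,\ell\}$. Along any concrete run of $\TA_\tau[\gamma]$ the clock value seen by the guards of one block is non-decreasing and starts from the value installed by the preceding reset ($0$ for $B_1$, $b_{k_{s-1}}$ for $B_s$ with $s\ge 2$), while a reset erases all memory of the previous block. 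Hence feasibility of $\tau$ is equivalent to feasibility of each reset-free sub-run in isolation, so it suffices to certify the formulas $\varphi_{i,j}(\tau,\gamma)$ for indices $i\le j$ lying in a common block (a reset between guard-checks $i$ and $j$ decouples $g_i$ from $g_j$).

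For the ``only if'' direction I would take a witnessing run $\xi=(q_0,\omega_0)\xrightarrow{d_0}(q_0,\omega_0')\xrightarrow{a_1}\cdots\xrightarrow{a_\ell}(q_\ell,\omega_\ell)$ of $\TA_\tau[\gamma]$, so $(\gamma,\omega_{t-1}')\models g_t$ and $\omega_t=\omega_{t-1}'[u_t]$. For $i\le j$ in a common block we have $u_i=\cdots=u_{j-1}=\emptyset$, hence $\omega_{i-1}'\le\omega_{j-1}'$. From $\omega_{i-1}'\models g_i[\gamma]$ we get $\omega_{i-1}'\ge 0$ and $\omega_{i-1}'\models\ELB(g_i[\gamma])$; since every conjunct of $\EUP$ is an upper bound on $x$, hence downward closed, $\omega_{j-1}'\models\EUP(g_j[\gamma])$ together with $\omega_{i-1}'\le\omega_{j-1}'$ give $\omega_{i-1}'\models\EUP(g_j[\gamma])$. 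So $\omega_{i-1}'$ realises $\varphi_{i,j}(\tau,\gamma)$.

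For the ``if'' direction I would replay the averaging construction from the proof of Lemma~\ref{lem:withoutreset}, one block at a time. Taking the relevant case in which every reset sets $x:=0$ (so each block starts the clock at $0$), for a block $B_s=\{a,\dots,b\}$ put, for $a\le i\le b+1$,
\[
\omega_{i-1}'=\tfrac{1}{2}\Bigl(\max\bigl(\{0\}\cup\{\linf(g_t[\gamma]):a\le t\le i-1\}\bigr)+\min\{\usup(g_t[\gamma]):i\le t\le b\}\Bigr),
\]
with the conventions of Definition~\ref{def:min}. Satisfiability of every $\varphi_{i,j}$ inside $B_s$ gives $\linf(g_i[\gamma])\le\usup(g_j[\gamma])$ for $a\le i\le j\le b$, and then the computations behind equations~\eqref{eq:1} and~\eqref{eq:2} show that these values are non-decreasing within the block and satisfy $\linf(g_i[\gamma])\le\omega_{i-1}'\le\usup(g_i[\gamma])$; the same four-case split on strictness as before then gives $\omega_{i-1}'\models g_i[\gamma]$. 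Stitching the per-block runs together — at each $k_s$ taking the reset transition $a_{k_s}$, whose guard $g_{k_s}$ is the last guard of $B_s$ and is already satisfied, and which then zeroes the clock for $B_{s+1}$ — yields a concrete run of $\TA_\tau[\gamma]$, whence $R(\TA_\tau[\gamma])\neq\emptyset$.

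The step I expect to be most delicate is the block boundary: one must confirm that the reset transition $a_{k_s}$ fires at a clock value consistent with both the tail of $B_s$ and the value fed into $B_{s+1}$, and — if general resets $x:=b$ are allowed instead of $x:=0$ — note that such a reset merely raises the lower-bound floor of the following block from $0$ to $b$, so that the relevant instances of $\varphi_{i,j}$ would need the extra conjunct $x\ge b$. Modulo this bookkeeping, the argument is a routine blockwise lift of Lemma~\ref{lem:withoutreset}.
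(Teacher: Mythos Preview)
Your approach coincides with the paper's: both reduce Lemma~\ref{lem:withreset} to Lemma~\ref{lem:withoutreset} by cutting the run at its reset positions and handling each reset-free segment separately. The paper phrases this as an induction on the number $k$ of nonempty updates: it peels off the prefix $\tau_1$ up to and including the first reset at position $h$, invokes Lemma~\ref{lem:withoutreset} on the reset-stripped prefix $\tau_2$ to get a run ending at $(q_h,\omega_h[u_h])$, and then applies the induction hypothesis to a tail $\tau_3$ restarted from the reset value. Your direct blockwise decomposition is the same argument unrolled: you treat all blocks at once, replay the averaging construction of Lemma~\ref{lem:withoutreset} inside each, and stitch at each $k_s$, rather than processing the resets one at a time.

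One substantive difference worth noting: you explicitly restrict to same-block pairs $(i,j)$ and flag that a reset $x:=b$ with $b>0$ forces an extra floor $x\ge b$ in the next block's instances. The paper's statement literally quantifies over \emph{all} pairs $i\le j$ and its inductive step silently needs the satisfiability of $\varphi_{1,j}(\tau_3,\gamma)$, i.e.\ $x\ge\omega_h[u_h]$ together with $\EUP(g_j[\gamma])$, which is not among the original hypotheses $\varphi_{i,j}(\tau,\gamma)$; the forward implication is dismissed as ``easy to check'' without addressing cross-block pairs at all. In effect your formulation is a more careful version of the same reduction: the paper is tacitly working under the convention that resets are to~$0$ and that cross-block instances play no role, and you make both of these points explicit.
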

\begin{proof}
	The ``if" side is easy to check. Let $k$ be the number of transition which contain non-empty update set.  We proof
	by induction k. When $k=0$, by the {\em Lemma \ref{lem:withoutreset}}, the conclusion holds.
	
	Assume that the conclusion holds, when $k\le K$ where $K\ge 0$. When $k=K+1$, let $h$ be first index which contains 
	non-empty update set.  Let 
	 \[\tau_1=(q_0,true)(g_1,a_1,u_1)(q_1,true)\cdots (g_h,a_h,u_h)(q_h,true)\] and
	\[\tau_2=(q_0,true)(g_1,a_1,u_1)(q_1,true)\cdots (g_h,a_h,true)(q_h,true).\] As $\tau_2$ obtained 
	from relaxing the last update of $\tau_1$, 	employing {\em Lemma \ref{lem:withoutreset}} we have $R(\TA_{\tau_2}[\gamma])\neq \emptyset$.
	By the proof procedure of {\em Lemma  \ref{lem:withoutreset}}, there is a run
	\[\xi=s_0 d_0' s_0' a_1 s_1\cdots  d_{h-1}'s_{h-1}' a_h s_h \] of $\tau_2$ where
	 $s_0=(q_0,0),\ d_i'=\omega_{i+1}-\omega_i,\  s_i=(q_i, \omega_i),\ s_i'=(q_i, \omega_{i+1})$ for $i=0,\cdots, h$. After updating
	  $s_h=(q_h,\omega_h[u_h])$, $\xi$ is a run of $\tau_1$.	
	Let $\tau_3$ be a  syntactic run
	\[(q',true)(x= \omega_h[u_h],a,\emptyset )(q_{h},true)(g_{h+1},a_{h+1},u_{h+1})\cdots (g_\ell,a_\ell,u_{\ell})(q_\ell,true),\]
	where $x= \omega_i[u_h]$  is shorthand of $\left(x\le \omega_i[u_h]\right)\wedge \left(-x\le -\omega_i[u_h]\right) $. Since the number of transition 
	which contain non-empty update set is $K$ in $\tau_3$, by assumption,
	$R(\TA_{\tau_3}[\gamma])\neq \emptyset$. Combining $R(\TA_{\tau_1}[\gamma])\neq \emptyset$ and $R(\TA_{\tau_3}[\gamma])\neq \emptyset$,
	 we obtain $R(\TA_{\tau}[\gamma])\neq \emptyset$. Hence, the conclusion holds when $k=K+1$. Thus, the conclusion holds when $k\ge 0$.  
\qed	 
\end{proof}

\begin{theorem}[One parameter clock]
	\label{the:oneP}
	For a PTA $\TA$ with one parameter clock $x$ and  arbitrarily many  
	parameters,   set $\Gamma(\TA,\psi)$ is solvable if $\psi$ be  $\exists \Diamond \phi$.
\end{theorem}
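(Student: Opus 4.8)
The plan is to write $\Gamma(\TA,\psi)$, for $\psi\equiv\exists\Diamond\phi$, as a \emph{finite} union of effectively computable semialgebraic subsets of the parameter space, by chaining the reductions of Lemmas~\ref{lem:moveP}, \ref{lem:moveI} and \ref{lem:withreset} with the cylindrical algebraic decomposition of Theorem~\ref{the:cad} / Lemma~\ref{lem:cad}. First I would use the semantics of $\exists\Diamond$ to observe that $\TA[\gamma]\models\psi$ iff some syntactic run $\tau$ of $\TA$ satisfies $\tau[\gamma]\models\phi$, so that $\Gamma(\TA,\psi)=\bigcup_{\tau}\{\gamma\mid\tau[\gamma]\models\phi\}$. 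By Lemma~\ref{lem:moveP} the $\tau$-th term equals $\{\gamma\mid R(\TA_{\alpha(\tau)}[\gamma])\neq\emptyset\}$, and applying Lemma~\ref{lem:moveI} to the run $\alpha(\tau)$ it further equals $\{\gamma\mid(\gamma,(0,\dots,0))\models I_{q_0}\}\cap\{\gamma\mid R(\TA_{\beta(\alpha(\tau))}[\gamma])\neq\emptyset\}$. All guards of $\beta(\alpha(\tau))$ are simple constraints except possibly the last, which carries $\alpha(\phi,q_\ell)$ and may contain disjunctions; putting that guard in disjunctive normal form splits $\beta(\alpha(\tau))$ into finitely many \emph{concrete simple syntactic runs} $\tau'$ (runs with no invariants and with simple-constraint guards). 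Hence $\Gamma(\TA,\psi)$ is a union, over such runs $\tau'$, of sets of the form $\{\gamma\mid(\gamma,(0,\dots,0))\models I_{q_0}\}\cap\{\gamma\mid R(\TA_{\tau'}[\gamma])\neq\emptyset\}$.

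The second step is to reduce this still-infinite union to a finite one. Since $\TA$ has a single clock, every update either leaves the clock unchanged or resets it to one of the finitely many constants appearing in $\TA$; consequently the configuration reached immediately after any reset (and the initial configuration) is completely determined by a location of $\TA$ together with such a constant, so there are at most $1+|Q|\cdot c$ distinct ``segment--start'' configurations, $c$ being the number of reset constants. Within any maximal reset--free segment of a run $\tau'$ the underlying path visits locations of $\TA$; whenever a location is repeated inside a segment, or a segment--start configuration is repeated, I can delete the portion of the run in between, keeping $\phi$'s encoded constraints on the last configuration intact. The pruned run ends at the same location and still encodes $\phi$, and by Lemma~\ref{lem:withreset} its family of feasibility formulas $\{\varphi_{i,j}\}$ is, after reindexing the pairs $i\le j$, a sub--collection of that of $\tau'$; therefore $\{\gamma\mid R(\TA_{\tau'}[\gamma])\neq\emptyset\}$ is contained in the corresponding set for the pruned run. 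Consequently it suffices to take the union over runs of length at most some bound $N(\TA)$ (roughly $O(|Q|^{2}c)$), of which there are finitely many.

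Finally, for each fixed run $\tau'$ of length at most $N(\TA)$ I would make $\{\gamma\mid R(\TA_{\tau'}[\gamma])\neq\emptyset\}$ effective. By Lemma~\ref{lem:withreset} it equals $\bigcap_{1\le i\le j\le\ell}\{\gamma\mid\exists x\,\varphi_{i,j}(\tau',\gamma)\}$, and each $\varphi_{i,j}(\tau',\gamma)$ is a conjunction of the atomic constraints $x\ge 0$, $-x\prec e$, $x\prec e'$ with $e,e'$ polynomial expressions, i.e.\ a quantifier--free formula whose polynomials all lie in $\ZZ[P,x]$. Using Theorem~\ref{the:cad} / Lemma~\ref{lem:cad} to eliminate the clock variable $x$ yields a decomposition $S_1,\dots,S_k$ of the parameter space $\RRR^{m}$ with sample points $\aaa_t\in S_t$ on whose cells the truth value of $\exists x\,\varphi_{i,j}$ is constant; hence $\{\gamma\mid\exists x\,\varphi_{i,j}(\tau',\gamma)\}$ is the union of the cells $S_t$ for which $\exists x\,\varphi_{i,j}(\aaa_t,x)$ holds, a computable semialgebraic set. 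The condition $(\gamma,(0,\dots,0))\models I_{q_0}$ is semialgebraic as well. Intersecting these finitely many sets, and then taking the finite union over all runs $\tau'$ of length $\le N(\TA)$, produces $\Gamma(\TA,\psi)$ effectively, which proves the theorem.

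I expect the delicate step to be the length bound: one must check carefully that pruning a loop (i) leaves a legal syntactic run ending at the prescribed location with the encoding of $\phi$ still present, and (ii) only \emph{relaxes} the constraint family $\{\varphi_{i,j}\}$ — which hinges on the monotone behaviour of the reindexing of the pairs $i\le j$ and on the single--clock hypothesis that makes the set of post--reset configurations finite. Once this is in place, the three reductions and the \CAD-based elimination of $x$ are routine to assemble; the dual $\forall\Box\phi$ case, although not needed here, would follow by complementation, $\Gamma(\TA,\forall\Box\phi)=\RRR^{m}\setminus\Gamma(\TA,\exists\Diamond\neg\phi)$.
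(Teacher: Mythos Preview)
Your argument is workable but takes a different and more laborious route than the paper. The paper does \emph{not} bound the length of syntactic runs at all. Instead it applies Lemma~\ref{lem:cad} \emph{once}, to the fixed finite set $F$ of all polynomials occurring in $\TA$ and $\psi$, obtaining a decomposition $S_1,\dots,S_k$ of $\RRR^m$ with sample points $\aaa_i\in S_i$. The central claim is that membership in $\Gamma(\TA,\psi)$ is constant on each cell: if $\gamma\in S_i$ and $\TA[\gamma]\models\psi$, pick any witnessing syntactic run $\tau$, pass to $\beta(\alpha(\tau))$ via Lemmas~\ref{lem:moveP}--\ref{lem:moveI}, and invoke Lemma~\ref{lem:withreset}; feasibility of that run is equivalent to the satisfiability of the finitely many formulas $\varphi_{i,j}$, each of which is a Boolean combination of atomic constraints whose polynomials lie in $F$ and therefore has constant truth value across $S_i$. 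Hence the \emph{same} syntactic run $\tau$ witnesses $\TA[\gamma']\models\psi$ for every $\gamma'\in S_i$, and $\Gamma(\TA,\psi)=\bigcup\{S_i:\TA[\aaa_i]\models\psi\}$, where each test $\TA[\aaa_i]\models\psi$ is an ordinary (decidable) timed-automaton model-checking instance.

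Your approach instead first prunes runs down to bounded length via a pumping argument and then runs CAD per run. This is sound---the single-clock hypothesis makes the set of post-reset configurations finite, and deleting guards can only shrink the family $\{\varphi_{i,j}\}_{i\le j}$, hence enlarge the feasible region---but it is the delicate step you yourself flag, and it is unnecessary: the paper's observation that every $\varphi_{i,j}$, over \emph{every} syntactic run, is built from the same finite stock $F$ of polynomials lets a single CAD handle all (infinitely many) runs simultaneously, with the per-cell decision delegated to standard TA reachability. What you gain with your route is an explicit semialgebraic description assembled run by run; what the paper gains is a shorter argument with a single CAD call and no combinatorial bound to justify.
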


\begin{proof}
Let $F$ be a set of polynomials which contains all constrained  polynomials occurring in $\TA$ and $\psi$.  
Employing {\em Lemma \ref{lem:cad}}, there is a decomposition $S_1,\cdots, S_k$ of $\RRR^{m}$ such that $\varphi$ is true 
or false in $S_i$ for $i=1,\cdots, k$ where $\varphi$  is a    combinations formula  of constraints occurring in $\TA$ and $\psi$. 
Moreover, \CAD\ provide a sample point $\aaa_i$ where $\aaa_i\in S_i$ for $i=1,\cdots, k$.

We {\bf claim} that if $\left(\gamma\in S_i\right) \wedge\left( \TA[\gamma]\models \psi\right)$, then 
 $\TA[\gamma']\models \psi$ for each $\gamma'\in S_i$, where $i=1,\cdots, k$.
 
 The claim can be proved as follows:
 
 Since $\left(\gamma\in S_i\right) \wedge\left( \TA[\gamma]\models \psi\right)$, there is a syntactic run 
  \[\tau=(q_0,I_{q_0})(g_{a_1},a_1,u_{a_1})(q_1,I_{q_1})\cdots (g_{a_\ell},a_\ell,u_{a_\ell})(q_\ell,I_{q_\ell})\]  where 
  $q_i\in Q, (q_{i-1},g_{a_i},a_i,u_{a_i},q_i) \in \rightarrow$ such that $
  \tau[\gamma]\models \psi$. Employing  {\em Lemma~\ref{lem:moveP}} and {\em Lemma \ref{lem:moveI}}, there is a
   a syntactic run 
   	\begin{equation*}
   	\begin{split}
     \tau_1=(q_0,true)(g_1,a_1,u_{a_1})(q_1,true)\cdots (g_\ell,a_\ell,u_{a_\ell})(q_\ell,true) 
    \end{split}
    \end{equation*} such that $\tau[\gamma]\models\psi$ if and only if
    $R(\TA_{\tau_1}[\gamma])\neq \emptyset$ for each $\gamma$. By {\em Lemma \ref{lem:withreset}}, 
     $R(\TA_{\tau_1}[\gamma])\neq \emptyset $  under  parameter valuation $\gamma$  if and only if  formula $ \varphi_{i,j}(\tau_1,\gamma)$ is satisfiable  for $j=i,\cdots,\ell,\ i=1,\cdots, \ell$. Since  $\varphi_{i,j}$ is constraint which combines with some constraints in $F$,   $\varphi_{i,j} (\tau_1,\gamma)$ is true or false in $S_h$ for $h=1,\cdots, k$.
     Hence,  if $R(\TA_{\tau_1}[\aaa_i])\neq \emptyset$, then  $R(\TA_{\tau_1}[\gamma])\neq \emptyset$ for each $\gamma\in S_i$, $i=1,\cdots, k$.
     Therefore, the claim holds, moreover the conclusion holds.
\qed	
\end{proof}

\begin{corollary}
	\label{coro:oneP}
		For a PTA $\TA$  with one parameter clock $x$ and  arbitrarily many  
		parameters,  set $\Gamma(\TA,\psi)$ is solvable if $\psi$ be  $\forall \Box \phi$.

\end{corollary}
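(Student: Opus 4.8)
The plan is to obtain the $\forall\Box$ case by duality from \emph{Theorem~\ref{the:oneP}}, together with the observation that complementation preserves the class of sets that \CAD\ produces. First, for every parameter valuation $\gamma$ one has $\TA[\gamma]\models\forall\Box\phi$ exactly when no reachable state of $\TA[\gamma]$ satisfies $\neg\phi$, that is, $\TA[\gamma]\not\models\exists\Diamond\neg\phi$; this equivalence holds unconditionally, including the degenerate case in which $\TA[\gamma]$ has no reachable states at all (then $\forall\Box\phi$ holds vacuously and $\exists\Diamond\neg\phi$ fails). Hence
\[\Gamma(\TA,\forall\Box\phi)=\RRR^m\setminus\Gamma(\TA,\exists\Diamond\neg\phi),\]
so it suffices to compute the feasible region of $\psi'\deff\exists\Diamond\neg\phi$ and take its complement in $\RRR^m$.

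Second, $\neg\phi$ is again a state formula in the sense of \emph{Definition~\ref{def:prop}} (negation, conjunction and disjunction all belong to that grammar), so $\psi'$ is a legitimate system property and the machinery of \emph{Theorem~\ref{the:oneP}} applies to it. Concretely, I would take $F$ to be the set of all polynomials occurring in $\TA$ together with those occurring in $\neg\phi$, run \emph{Lemma~\ref{lem:cad}} on $F$ to obtain a decomposition $S_1,\dots,S_k$ of $\RRR^m$ with sample points $\aaa_i\in S_i$, and then reuse the \textbf{claim} established inside the proof of \emph{Theorem~\ref{the:oneP}}: since every formula $\varphi_{i,j}(\tau_1,\gamma)$ produced by the reduction is a Boolean combination of constraints drawn from $F$, it is sign-invariant on each cell, whence $\TA[\gamma]\models\psi'$ if and only if $\TA[\aaa_i]\models\psi'$ for all $\gamma\in S_i$. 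The test $\TA[\aaa_i]\models\psi'$ is decidable (in PSPACE, by the remark following \emph{Problem~\ref{pro:synth}}, as it concerns a concrete TA). Therefore $\Gamma(\TA,\psi')=\bigcup_{i\in J}S_i$ with $J=\{\,i\mid\TA[\aaa_i]\models\psi'\,\}$, and consequently
\[\Gamma(\TA,\forall\Box\phi)=\RRR^m\setminus\bigcup_{i\in J}S_i=\bigcup_{i\notin J}S_i,\]
a finite union of \CAD\ cells, hence effectively computable.

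The one point that requires care — and the step I would spell out — is that the $\ELB$/$\EUP$ splitting underlying \emph{Lemma~\ref{lem:withoutreset}}, \emph{Lemma~\ref{lem:withreset}} and \emph{Theorem~\ref{the:oneP}} is stated for simple constraints, whereas the guard obtained by appending the encoding $\alpha(\neg\phi,q_\ell)$ may contain disjunctions. I would dispose of this syntactically: push negations down to the atoms, using $\neg(x\prec e)\equiv -x\prec'-e$ for $\prec\in\{<,\le\}$ with $\prec'$ the dual relation, $\neg(x=e)\equiv(x<e)\vee(-x<-e)$, and analogously for $x-y\prec e$ and for location literals via $\alpha$ (noting that $-e\in\mathcal{E}$ whenever $e\in\mathcal{E}$), and then rewrite $\neg\phi$ in disjunctive normal form $\neg\phi\equiv\bigvee_{t=1}^{r}\phi_t$ with each $\phi_t$ a simple state formula. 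Then $\exists\Diamond\neg\phi\equiv\bigvee_{t=1}^{r}\exists\Diamond\phi_t$, so $\Gamma(\TA,\psi')=\bigcup_{t=1}^{r}\Gamma(\TA,\exists\Diamond\phi_t)$ with each summand handled by \emph{Theorem~\ref{the:oneP}} in the strictly conjunctive setting; feeding the polynomials of all the $\phi_t$ into $F$, the argument of the preceding paragraph still exhibits $\Gamma(\TA,\psi')$ — and therefore its complement $\Gamma(\TA,\forall\Box\phi)$ — as a union of cells of a single \CAD. No genuinely new semantic reasoning is needed beyond this normalization.
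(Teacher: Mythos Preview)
Your proposal is correct and matches the paper's intended argument: the paper gives no explicit proof for this corollary, relying instead on the remark preceding \emph{Lemma~\ref{lem:moveP}} that $\forall\Box\phi$ is the dual of $\exists\Diamond\phi$, so only the $\exists\Diamond$ case needs to be treated. Your write-up in fact goes beyond the paper by making the DNF normalization of $\neg\phi$ explicit to reconcile disjunctions with the $\ELB$/$\EUP$ machinery, a point the paper acknowledges informally after \emph{Lemma~\ref{lem:moveP}} but never spells out.
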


\section{Parametric timed automata with linear   expression}
\label{sec:simple}

The {\em Theorem \ref{the:oneP}} is a beautiful result,  but it based one \CAD.   Thus,   even
with the improvements and various heuristics, \CAD's doubly-exponential worst-case
behavior has remained as a serious impediment. Hence,  for giving a practical algorithm, we  limit the expression which 
occurring in  PTA $\TA$ and property $\psi$ entirely contain in $\mathcal{LE}$.

In this section we consider synthesis problem of PTA  with one parameter clock $x$ and arbitrarily many  
parameters. The time values $\TT=\NN$ and parameter values $\PP=\ZZ$.

In this section, all the expressions are restricted to {\em linear   expression}. When there is a non-close constraint $e_1 < e_2$, as each 
variable take value in integer and each coefficient of $e_1,e_2$ are integer,    $e_1 < e_2$ is equivalent to  $e_1 \le  e_2-1$. So in the following 
we only consider close constraint.  

\begin{equation}
\label{eq:int}
\left\{ 	\begin{array}{@{}ll@{}}
a_{11}p_1+\cdots+a_{1n}p_m\ge b_1,\\
\vdots \\
a_{r1}p_1+\cdots+a_{rm}p_m\ge b_r,\\
a_{(r+1)1}p_1+\cdots+a_{(r+1)m}p_m=b_{r+1},\\
\vdots\\
a_{(r+t)1}p_1+\cdots+a_{(r+t)m}p_m=b_{r+t},
\end{array}       
 \right.
\end{equation}
with $a_{ij},b_j\in\ZZ$. In order to solve it we will use the following supplementary systems of linear Diophantine equations:
\begin{equation}
\label{eq:dio}
\left\{ 	\begin{array}{@{}ll@{}}
a_{11}p_1+\cdots+a_{1n}p_m-p_{m+1}= b_1,\\
\vdots \\
a_{r1}p_1+\cdots+a_{rm}p_m-p_{m+r}= b_r,\\
a_{(r+1)1}p_1+\cdots+a_{(r+1)m}p_m=b_{r+1},\\
\vdots\\
a_{(r+t)1}p_1+\cdots+a_{(r+t)m}p_m=b_{r+t}.
\end{array}       
\right.
\end{equation}

The  variables $p_{m+1},\dots, p_{m+r}$ are usually known in the literature as slack variables. There are many works about providing 
algorithm to solve solution of equation (\ref{eq:dio}) \cite{domich1987hermite,hochbaum1994can}.

 \begin{lemma}
 	\label{lem:lincad}
 	 Let  $\phi$ be a  a linear formula where each constraint of $\phi$ is form $e_1\prec e_2 $ where $e_1,e_2$ are linear expression in $\ZZ[p_1,\cdots,p_m,x]$.
	  There is a  decomposition $S_1,\cdots,S_k$ of $\RRR^m$ which each element $S_i$ can be presented as  form of equation (\ref{eq:int})
 	such that $\phi$ is true or false for each point of $S_i$ for $i=1,\cdots,k$. Moreover, \CAD\  provides a 
 	sample point $\aaa_i$ where $\aaa_i\in S_i$ for $i=1,\cdots, k$. 
 \end{lemma}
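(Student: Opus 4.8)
The plan is to run the construction underlying Lemma~\ref{lem:cad} (cylindrical algebraic decomposition, with $x$ taken as the distinguished, first-eliminated variable) and to observe that in the purely linear case the projection operator keeps every polynomial linear; consequently the cells of the resulting decomposition of $\RRR^m$ are faces of an arrangement of finitely many hyperplanes and each can be written in the shape of equation~(\ref{eq:int}).

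First I would normalise the atoms of $\phi$. Each constraint $e_1 \prec e_2$ with $e_1, e_2 \in \ZZ[p_1,\dots,p_m,x]$ linear rewrites as $c\,x + L(p_1,\dots,p_m) \le 0$, where $c \in \ZZ$ and $L$ is a linear form with integer coefficients, using the remark opening this section to turn a strict inequality into a non-strict one (and moving sides to fix the orientation). I would then split the atoms into the $x$-free ones ($c = 0$), whose truth at a parameter point $\aaa$ depends only on the sign of $L(\aaa)$, and the $x$-constrained ones ($c \neq 0$), each contributing an affine boundary $\theta(p_1,\dots,p_m) = -L/c$ for the variable $x$; I would also adjoin the atom $x \ge 0$ as a boundary $\theta \equiv 0$.

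Second I would fix the one-step linear projection onto $\RRR^m$: the finite set $\mathcal{Q} \subset \ZZ[p_1,\dots,p_m]$ consisting of (i) the linear forms $L$ of the $x$-free atoms, and (ii) for every pair of $x$-constrained atoms $c_i x + L_i$ and $c_j x + L_j$ the linear form $c_i L_j - c_j L_i$, i.e.\ their resultant in $x$, whose sign (together with that of the constant $c_i c_j$) fixes the order of the boundaries $\theta_i,\theta_j$. Since the inputs have degree $1$ in $x$, there are no discriminants and the leading coefficients $c_i$ are nonzero constants, so $\mathcal{Q}$ is again a \emph{finite set of linear forms} — this is exactly where the doubly-exponential blow-up of general \CAD\ disappears. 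Recursing this projection down to $\RRR^0$ yields, level by level, a sign-invariant decomposition whose cells $S_i$ in $\RRR^m$ are each cut out by a conjunction of constraints $\ell(\vec p) = 0$ and $\ell(\vec p) > 0$ over forms $\ell$ obtained from $\mathcal{Q}$; using $\ell(\vec p) > 0 \Leftrightarrow \ell(\vec p) \ge 1$ on integer parameter points (equivalently, passing to the closed polyhedra of the induced subdivision) puts each $S_i$ in the form of~(\ref{eq:int}). A rational sample point $\aaa_i \in S_i$ is then produced by the usual back-substitution: in each one-dimensional fibre of the recursion pick a root $\theta_j$, a rational point strictly between two consecutive boundaries, or a point below the least / above the greatest boundary.

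It remains to verify correctness: over a cell $S_i$ every $\ell \in \mathcal{Q}$ is sign-invariant, hence the sign of each $x$-free $L(\aaa)$ and the entire ordering of the boundaries $\theta_j(\aaa)$ (and of $0$) are constant for $\aaa \in S_i$; therefore the solution set $\{x : \phi(\aaa,x)\}$ has a constant combinatorial description over $S_i$, and in particular the truth value of the projected formula — its nonemptiness, as used for $\varphi_{i,j}$ in Theorem~\ref{the:oneP} — is constant on $S_i$. I expect the main effort to lie in the uniform bookkeeping of the degenerate cases ($x$-free atoms, coincidences $\theta_i = \theta_j$, the auxiliary boundary $0$) alongside the generic case, and in checking that replacing the strict inequalities of the arrangement by the $\ge$-form~(\ref{eq:int}) preserves both disjointness of the $S_i$ and their covering of (the integer points of) $\RRR^m$; once $\mathcal{Q}$ is seen to stay linear, termination and correctness are just the linear instance of the standard \CAD\ argument.
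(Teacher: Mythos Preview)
The paper states Lemma~\ref{lem:lincad} without proof, treating it as the linear specialisation of Lemma~\ref{lem:cad} (itself quoted as a corollary of Collins' \CAD\ algorithm, Theorem~\ref{the:cad}); no argument beyond that reference is given. Your proposal correctly unpacks what that specialisation amounts to: the key observation --- that the \CAD\ projection of degree-one polynomials in $x$ produces only linear forms in $p_1,\dots,p_m$ (pairwise resultants $c_iL_j-c_jL_i$, no discriminants, constant leading coefficients) --- is exactly what forces every cell of the induced decomposition to be a face of a hyperplane arrangement and hence describable by a system of the shape~(\ref{eq:int}). The sample-point extraction by back-substitution is likewise the standard \CAD\ lifting step. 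Your caveat about strict versus non-strict inequalities is well placed: the paper's own statement is loose here, decomposing $\RRR^m$ while~(\ref{eq:int}) allows only $\ge$ and $=$; since the ambient section fixes $\PP=\ZZ$ and opens by replacing $<$ with $\le\cdot-1$, your integer-point reading ($\ell>0\Leftrightarrow\ell\ge1$) is the intended one, and the disjointness/covering worry you flag is a feature of the lemma's phrasing rather than a gap in your argument.
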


 \begin{theorem}[One parameter clock]
 	\label{the:oneP1}
 	For a PTA $\TA$with one parameter clock $x$ andarbitrarily many  
 parameters,  set $\Gamma(\TA,\psi)$ is solvable if $\psi$ be  $\exists \Diamond \phi$.
 \end{theorem}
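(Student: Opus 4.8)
The plan is to run the same reduction pipeline as in the proof of \emph{Theorem \ref{the:oneP}}, only replacing the appeal to \emph{Lemma \ref{lem:cad}} by its linear refinement \emph{Lemma \ref{lem:lincad}}, so that the decomposition cells are polyhedra of the explicit shape (\ref{eq:int}); this makes the feasible region effectively describable over $\ZZ$ and avoids the doubly-exponential cost of general \CAD. First I would collect a finite set $F$ of linear polynomials, namely the polynomial underlying every atomic constraint occurring in $\TA$ and in $\psi$, together with every difference $e-e'$ of a lower-bound expression $e$ and an upper-bound expression $e'$ among them. The point is that these are the only polynomials that can appear in any formula $\varphi_{i,j}(\tau_1,\gamma)$, for an arbitrary syntactic run $\tau$, because $\varphi_{i,j}$ mentions only $\ELB(g_i)$, $\EUP(g_j)$ and $x\ge 0$ with $g_i,g_j$ guards of $\TA$ after the encoding of $\phi$. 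Applying \emph{Lemma \ref{lem:lincad}} to the family of constraints built from $F$ yields a decomposition $S_1,\dots,S_k$ of $\RRR^m$, each $S_h$ presented in the form (\ref{eq:int}), on which every Boolean combination of those constraints and hence its projection $\exists x$ has a constant truth value; moreover \CAD\ returns a sample point $\aaa_h\in S_h$, which, since each $S_h$ is a rational polyhedron, may be taken rational.

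Next I would prove the cell-invariance \textbf{claim}, exactly as in \emph{Theorem \ref{the:oneP}}: whether $\TA[\gamma]\models\psi$ holds depends only on the cell $S_h$ with $\gamma\in S_h$. Indeed, if $\gamma\in S_h$ and $\TA[\gamma]\models\exists\Diamond\phi$, pick a syntactic run $\tau$ with $\tau[\gamma]\models\phi$; by \emph{Lemma \ref{lem:moveP}} and \emph{Lemma \ref{lem:moveI}} this is equivalent to $R(\TA_{\tau_1}[\gamma])\neq\emptyset$ for the associated invariant-free run $\tau_1$, and by \emph{Lemma \ref{lem:withreset}} to the satisfiability over $x$ of every $\varphi_{i,j}(\tau_1,\gamma)$. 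Since each constraint of $\varphi_{i,j}(\tau_1,\gamma)$ is built from $F$, its satisfiability over $x$ is constant on $S_h$; hence ``$\tau[\gamma]\models\phi$'' is constant on $S_h$ for every fixed $\tau$, and therefore so is the disjunction of these statements over all (possibly infinitely many) syntactic runs $\tau$, i.e.\ $\TA[\gamma]\models\psi$ is constant on $S_h$. Consequently $\Gamma(\TA,\psi)$ equals the union of those cells $S_h$ for which $\TA[\aaa_h]\models\psi$; since $\aaa_h$ is a fixed rational valuation, $\TA[\aaa_h]$ is an ordinary timed automaton (after clearing denominators) and $\TA[\aaa_h]\models\psi$ is decidable in PSPACE.

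Finally, because $\PP=\ZZ$, the answer must be produced as a description of the integer points of the good cells. Each selected $S_h$ is a system of linear inequalities and equations of the form (\ref{eq:int}); its integer solution set is obtained by passing to the supplementary Diophantine system (\ref{eq:dio}) with slack variables $p_{m+1},\dots,p_{m+r}$ and invoking the known solution algorithms for systems of linear Diophantine equations (Hermite normal form, etc.) cited after (\ref{eq:dio}). The union of these integer descriptions over the good cells is the required $\Gamma(\TA,\psi)$. I expect the main obstacle, and the only genuinely new step relative to \emph{Theorem \ref{the:oneP}}, to be the argument that one decomposition of $\RRR^m$ works simultaneously for \emph{all} syntactic runs --- which rests on the finiteness of $F$, i.e.\ on the fact that $\varphi_{i,j}$ never introduces constraints outside $\TA$ and $\psi$ --- together with keeping the cell test effective at sample points that need not be integral, which is resolved by the rationality of the polyhedra $S_h$.
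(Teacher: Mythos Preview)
The paper does not write out a proof of Theorem~\ref{the:oneP1}; it only supplies the ingredients (the linear refinement Lemma~\ref{lem:lincad}, the polyhedral form~(\ref{eq:int}), and the Diophantine system~(\ref{eq:dio})) and leaves it implicit that the argument of Theorem~\ref{the:oneP} is to be replayed with Lemma~\ref{lem:lincad} in place of Lemma~\ref{lem:cad} and with the integer parameter domain handled via~(\ref{eq:dio}). Your proposal carries out exactly this intended route: the reduction chain Lemma~\ref{lem:moveP} $\to$ Lemma~\ref{lem:moveI} $\to$ Lemma~\ref{lem:withreset}, the cell-invariance claim, and the per-cell test at a sample point are all copied from the proof of Theorem~\ref{the:oneP}, and the final step of extracting integer points from each good polyhedron via~(\ref{eq:dio}) is precisely what the paper sets up that machinery for. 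So your approach coincides with the paper's.

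Two minor remarks. First, your explicit enlargement of $F$ by the differences $e-e'$ is not something the paper does (in Theorem~\ref{the:oneP} it takes $F$ to be just the polynomials occurring in $\TA$ and $\psi$ and relies on the projection phase of \CAD\ to generate whatever is needed); in the linear setting your precaution is harmless and arguably clarifies why the satisfiability of each $\varphi_{i,j}$ is constant on a cell. Second, your care about rational versus integer sample points goes beyond what the paper says, but is in the right spirit: since every cell is a rational polyhedron, one can either test at a rational sample and then intersect with $\ZZ^m$ via~(\ref{eq:dio}), or first test integer feasibility of the cell and use an integer witness---either way the per-cell decision is effective.
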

 
 \begin{corollary}
 	\label{coro:oneP1}
 	 	For a PTA $\TA$with one parameter clock $x$ and arbitrarily many  
 	 	parameters,   set $\Gamma(\TA,\psi)$ is solvable if $\psi$ be   $\forall \Box \phi$.
 \end{corollary}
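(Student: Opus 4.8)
To establish Theorem~\ref{the:oneP1} (whence Corollary~\ref{coro:oneP1}), the plan is to replay the proof of Theorem~\ref{the:oneP} almost line for line, with Lemma~\ref{lem:lincad} playing the role of Lemma~\ref{lem:cad}, and then to cash in the extra structure the linear setting provides. First I would collect into a finite set $F$ every atomic constraint occurring in $\TA$ and in $\psi$ (these involve the single parametric clock $x$ and the parameters $p_1,\dots,p_m$), and apply Lemma~\ref{lem:lincad} in its delineability form to $F$, obtaining a decomposition $S_1,\dots,S_k$ of $\RRR^m$, each cell $S_h$ presented as an integer-linear system of the shape~(\ref{eq:int}), such that the combinatorial arrangement of the $F$-constraints in the $x$-direction — and hence the truth value of any statement ``$\exists x\,\varphi$'' with $\varphi$ a Boolean combination of $F$-constraints — is fixed on each $S_h$. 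Then, exactly as for Theorem~\ref{the:oneP}: given a syntactic run $\tau$ with $\tau[\gamma]\models\psi$, Lemma~\ref{lem:moveP} folds $\phi$ into the last guard and Lemma~\ref{lem:moveI} pushes the invariants into the guards, reducing $\tau[\gamma]\models\psi$ to $R(\TA_{\tau_1}[\gamma])\neq\emptyset$ for the resulting reset- and invariant-free run $\tau_1$, and Lemma~\ref{lem:withreset} reduces this in turn to the simultaneous satisfiability of the formulas $\varphi_{i,j}(\tau_1,\gamma)$ for $1\le i\le j\le\ell$.

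The genuinely new step is eliminating $x$ from each $\varphi_{i,j}$. Since $\TT=\NN$ and $\PP=\ZZ$ in this section, the clock occurs in every conjunct of $\ELB(g_i)$ and $\EUP(g_j)$ with coefficient $\pm1$, and $\linf(g_i[\gamma])$ and $\usup(g_j[\gamma])$ are maxima and minima of integer linear forms in the parameters; hence ``$\varphi_{i,j}(\tau_1,\gamma)$ is satisfiable'' is equivalent to the purely parametric condition $\linf(g_i[\gamma])\le\usup(g_j[\gamma])\ \wedge\ \usup(g_j[\gamma])\ge 0$, with no rounding slack. Unfolding the maxima and minima makes this a conjunction of finitely many linear inequalities in $\gamma$, and by the delineability guarantee of Lemma~\ref{lem:lincad} it — and therefore $R(\TA_{\tau_1}[\gamma])\neq\emptyset$, and therefore $\TA[\gamma]\models\psi$ — is constant on each cell $S_h$. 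It follows that $\Gamma(\TA,\psi)=\bigcup\{\,S_h\cap\ZZ^m\mid S_h\cap\ZZ^m\neq\emptyset\text{ and }\TA[\aaa_h]\models\psi\,\}$, where $\aaa_h$ is any integer point of $S_h$, and the right-hand side is computable: there are finitely many cells; each $\TA[\aaa_h]$ is an ordinary timed automaton, so $\TA[\aaa_h]\models\psi$ is decidable in PSPACE by~\cite{alur1994a}; and the integer points of a cell~(\ref{eq:int}) — including the test for whether any exist — are produced from the slack-variable system~(\ref{eq:dio}) by the standard linear Diophantine algorithms~\cite{domich1987hermite,hochbaum1994can}. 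Corollary~\ref{coro:oneP1} then follows by complementation, since $\forall\Box\phi\equiv\neg\exists\Diamond\neg\phi$ and the complement of a finite union of sets of the form~(\ref{eq:int}) is again such a union.

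The step I expect to cost the most care is this integer bookkeeping and the passage to closed constraints it rests on: replacing $e_1<e_2$ by $e_1\le e_2-1$ is sound only because clocks and parameters are integral and all coefficients are integers, and one has to check that this substitution is compatible with the definitions of $\linf$ and $\usup$ so that ``$\exists x\in\NN$'' genuinely collapses to the linear condition above. A second point that deserves an explicit line — it is glossed over in the proof of Theorem~\ref{the:oneP} as well — is that only finitely many distinct formulas $\varphi_{i,j}$ arise as $\tau$ ranges over all syntactic runs; this is what turns the a priori infinite union over runs into a union over the finitely many cells $S_h$, and it uses that each $\overline{g}_i$ is determined by a transition of $\TA$ (possibly conjoined with $\alpha(\phi,q_\ell)$), of which there are only finitely many, while a reset $x:=b$ turns the affected invariant into a parameter-only constraint already classified by the decomposition.
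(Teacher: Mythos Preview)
Your proposal is correct and matches the paper's intended approach: the paper states both Theorem~\ref{the:oneP1} and Corollary~\ref{coro:oneP1} without proof, evidently expecting the reader to replay the argument of Theorem~\ref{the:oneP} with Lemma~\ref{lem:lincad} in place of Lemma~\ref{lem:cad} and then pass to $\forall\Box\phi$ by duality, exactly as you do. Your added remarks on the integer bookkeeping (the passage to closed constraints, the Diophantine description of $S_h\cap\ZZ^m$) and on why only finitely many distinct $\varphi_{i,j}$ arise are genuine clarifications that the paper omits, not deviations from its line of argument.
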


\bibliographystyle{acm}
\bibliography{ptm}

\end{document}